\newcommand\vldbdoi{10.14778/3675034.3675044}
\newcommand\vldbpages{2528 - 2540}
\newcommand\vldbvolume{17}
\newcommand\vldbissue{10}
\newcommand\vldbyear{2024}
\newcommand\vldbauthors{\authors}
\newcommand\vldbtitle{\shorttitle} 
\newcommand\vldbavailabilityurl{https://github.com/YimingQiao/Blitzcrank}
\newcommand\vldbpagestyle{empty} 
\DeclareMathOperator{\ShiftLeft}{{\LARGE \texttt{<<}}}
\DeclareMathOperator{\ShiftRight}{{\LARGE \texttt{>>}}}
\newtheorem{thm}{Theorem}
\newcommand{\work}{\textsc{Blitzcrank}\xspace}
\newcommand{\revision}[1]{\textcolor{black}{#1}}
\begin{document}
\title{Blitzcrank: Fast Semantic Compression for In-memory Online Transaction Processing}

\author{Yiming Qiao}
\affiliation{
  \institution{Tsinghua University}
}
\email{qiaoym21@mails.tsinghua.edu.cn}

\author{Yihan Gao}
\affiliation{
}
\email{gaoyihan@gmail.com}

\author{Huanchen Zhang}
\authornote{Huanchen Zhang is also affiliated with Shanghai Qi Zhi Institute.}
\affiliation{
  \institution{Tsinghua University}
}
\email{huanchen@tsinghua.edu.cn}

\begin{abstract}
We present \work, a high-speed semantic compressor designed for OLTP databases.
Previous solutions are inadequate for compressing row-stores:
they suffer from either low compression factor due to a coarse compression granularity
or suboptimal performance due to the inefficiency in handling dynamic data sets.
To solve these problems, we first propose novel semantic models that support fast inferences
and dynamic value set for both discrete and continuous data types.
We then introduce a new entropy encoding algorithm, called delayed coding,
that achieves significant improvement in the decoding speed compared to modern arithmetic
coding implementations.
We evaluate \work in both standalone microbenchmarks
and a multicore in-memory row-store using the TPC-C benchmark.
Our results show that \work achieves a sub-microsecond latency for decompressing a random tuple
while obtaining high compression factors.
This leads to an 85\% memory reduction in the TPC-C evaluation
with a moderate (19\%) throughput degradation.
For data sets larger than the available physical memory,
\work help the database sustain a high throughput for more transactions
before the I/O overhead dominates.
\end{abstract}

\maketitle

\pagestyle{\vldbpagestyle}
\begingroup\small\noindent\raggedright\textbf{PVLDB Reference Format:}\\
\vldbauthors. \vldbtitle. PVLDB, \vldbvolume(\vldbissue): \vldbpages, \vldbyear.\\
\href{https://doi.org/\vldbdoi}{doi:\vldbdoi}
\endgroup
\begingroup
\renewcommand\thefootnote{}\footnote{\noindent
This work is licensed under the Creative Commons BY-NC-ND 4.0 International License. Visit \url{https://creativecommons.org/licenses/by-nc-nd/4.0/} to view a copy of this license. For any use beyond those covered by this license, obtain permission by emailing \href{mailto:info@vldb.org}{info@vldb.org}. Copyright is held by the owner/author(s). Publication rights licensed to the VLDB Endowment. \\
\raggedright Proceedings of the VLDB Endowment, Vol. \vldbvolume, No. \vldbissue\ %
ISSN 2150-8097. \\
\href{https://doi.org/\vldbdoi}{doi:\vldbdoi} \\
}\addtocounter{footnote}{-1}\endgroup

\ifdefempty{\vldbavailabilityurl}{}{
\vspace{.3cm}
\begingroup\small\noindent\raggedright\textbf{PVLDB Artifact Availability:}\\
The source code, data, and/or other artifacts have been made available at \url{\vldbavailabilityurl}.
\endgroup
}

\section{Introduction} 
In-memory database management systems (DBMSs) offer low latency and high throughput for online transaction processing (OLTP) workloads when the working set fits in memory \cite{van2018managing, sudhir2021replicated, tu2013speedy}. 
For data sets beyond physical memory, their performance degrades quickly because of expensive random I/Os to fetch tuples.
Although DRAM price has been decreasing, memory is still a limiting resource because of the increasing price gap between DRAM and SSDs~\cite{ziegler2022scalestore, haas2020exploiting}.

Applying compression can increase the capacity of in-memory DBMSs with the same hardware cost, thus reducing or even eliminating disk accesses to improve performance~\cite{DBLP:conf/sigmod/ZhangAPKMS16, 10.1145/3318464.3380583Hope}.
However, most existing compression techniques are designed for column-stores: they target read-mostly workloads with large batched processing \cite{kuschewski2023btrblocks, abadi2013design, abadi2006integrating, boncz2020fsst, damme2017lightweight, DBLP:conf/sigmod/FoufoulasSSI21, polychroniou2015efficient, barbarioli2023hierarchical, lasch2020faster}.
To compress in-memory row-stores efficiently, the compression schemes must satisfy additional requirements.
First, random access to tuples must be fast (e.g., sub-microsecond) because OLTP applications demand low query latencies.
For example, Amazon found that a 100 ms increase in latency would lead to a 1\% drop in sales \cite{DBLP:conf/wosp/HegerHMO17, DBLP:journals/pvldb/LerschSOL20}.
Second, the compression algorithms must handle newly inserted/updated tuples efficiently because OLTP workloads are typically write-heavy~\cite{oltp_workload_ibm}. A frequent reconstruction of the compression model is usually unacceptable because it brings too much performance overhead~\cite{DBLP:conf/vldb/PossP03CompressionOracle, DBLP:conf/vldb/RamanS06}.
Unfortunately, existing compression schemes are inadequate when serving OLTP workloads: they suffer from either low compression factor (i.e. the original size divided by the compressed size) due to a coarse compression granularity or suboptimal performance due to the inefficiency in handling dynamic data sets.

\setlength{\belowcaptionskip}{-4pt}
\begin{figure}[t!]
  \centering
  \includegraphics[width=0.8\linewidth]{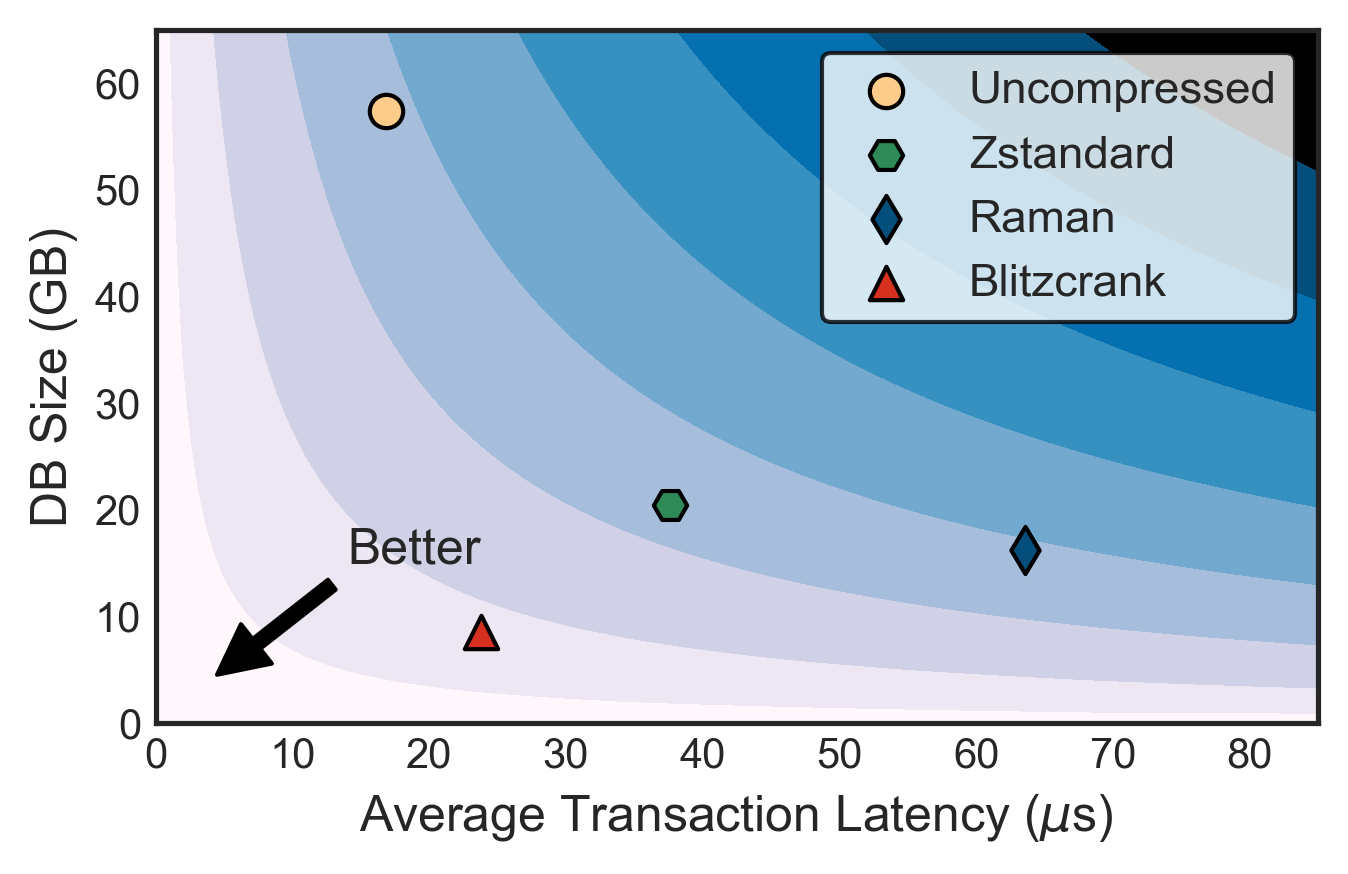}
  \vspace{-2ex}
  \caption{DB Size vs. Latency \textnormal{- \work makes the size-latency trade-offs more attractive compared to other tools in TPC-C.}}
  \label{fig:tpcc_hotmap}
  \vspace{-1.5ex}
\end{figure}
\textbf{Coarse Retrieval Granularity.} Modern general-purpose block compression algorithms such as Zstandard have high decompression throughput (up to 500 MB/s). They are widely used in operating systems, databases, file systems, and computer networks~\cite{zstd-application, yang2018elastic, DBLP:conf/mm/0002JZM021Multimedia}.
However, to access a single tuple, they must decompress the entire compression block \cite{DBLP:journals/tit/ZivL77gzip}, causing \textit{high random-access latency}.
One solution is to compress each tuple individually, but the compression factor suffers
because the algorithms prefer a longer context to create an effective dictionary in the sliding window \cite{DBLP:journals/tit/ZivL77gzip, DBLP:journals/tit/ZivL78}.

\textbf{Inefficient Handling of New Tuples.} The classic Raman's approach \cite{DBLP:conf/vldb/RamanS06} concatenates Huffman-coded values into variable-length tuples and then reorders the rows and columns so that it achieves a better compression factor using delta encoding. Although this solution compresses row-oriented data well, it cannot compress unseen values unless it initiates an expensive model reconstruction because such a solution relies on static dictionaries.

The above approaches are considered ``syntactic'' because they treat the uncompressed data simply as consecutive bytes~\cite{MacKay2003InformationTheory}. 
\revision{Semantic compression, on the other hand, leverages the high-level semantics in a relational table, such as value distributions and functional dependencies between columns to achieve better compression \cite{DBLP:conf/vldb/RamanS06}. Unlike the above syntactic methods that rely on static dictionaries, the semantic approach can use the same probability models to compress new tuples effectively as long as the attribute values follow the modeled distributions \cite{DBLP:conf/sigmod/BabuGR01}.}
Existing semantic compression methods, however, provide limited support for different data types, and their model inferences are slow.
For example, Squish and the more recent DeepSqueeze take 324 and 127 seconds, respectively, to compress a 75 MB relational table~\cite{DBLP:conf/sigmod/IlkhechiCGMFSC20, DBLP:conf/kdd/GaoP16}.


In this paper, we show that semantic compression can be fast, and it has potential beyond large archive compression. We present \textbf{\work}, a high-speed semantic compressor designed for OLTP databases. \work improves compression through both data modeling and data encoding \cite{DBLP:journals/tcom/ClearyW84}.
\vspace{-0.5em}
\begin{itemize}[leftmargin=*]
    \item For data modeling, \work introduces novel semantic models that allow fast encoding/decoding for both discrete- and continuous-value columns. It takes \work less than one second to compress the aforementioned 75 MB table.

    \item For data encoding, we propose \textit{delayed coding}, a novel fine-grained encoding algorithm that offers near-entropy compression as with Arithmetic Coding \cite{langdon1984introduction} while achieving a faster decompression speed compared to the modern asymmetric number system (ANS) \cite{DBLP:journals/corr/abs-2106-06438-ans}.
\end{itemize}

We first compared \work against state-of-the-art compressors that apply to \textit{row-stores},
including Zstandard \cite{DBLP:journals/rfc/rfc8878zstd} and Raman's approach \cite{DBLP:conf/vldb/RamanS06}, in standalone microbenchmarks based on real data sets.
\work achieves a sub-microsecond latency (fastest among the baselines) for decompressing a random tuple while obtaining high compression factors. 
We then integrated \work, along with baseline compressors, into the in-memory OLTP database, Silo \cite{tu2013speedy} and evaluated it using the TPC-C benchmark~\cite{tpcc}. 
\revision{The results are summarized in \cref{fig:tpcc_hotmap}. Compared to uncompressed tables, \work reduces memory usage by 85\% with a moderate (19\%) throughput degradation.}
Compared to using Zstandard, \work achieves a $2.4 \times$ higher compression factor and is 76\% faster.
When the data set exceeds the physical memory limit, \work greatly helps the database sustain high throughput and execute $4 \times$ more transactions within the same amount of time.

The paper makes the following contributions.
First, we identify the inefficiency of existing compression algorithms for OLTP databases from both data modeling and data encoding perspectives.
Second, we introduce novel semantic models designed for fast inferences for discrete and continuous data types.
Third, we propose the new delayed coding that is significantly faster than variants of arithmetic coding while achieving near-entropy compression.
Finally, we build \work based on the above technologies and show that semantic compression can be fast enough to make trade-offs between performance and space much more attractive than previous solutions when integrated into an in-memory OLTP database, such as Silo~\cite{tu2013speedy}.

\section{Preliminaries}

This section provides the necessary background information to understand the design of \work. \cref{sec: arithmetic coding} describes the classic arithmetic coding, which is the basis of our proposed delayed coding. \cref{sec:structure-learning} introduces the existing structure learning techniques adopted in \work to leverage functional dependencies between columns for compression.


\subsection{Arithmetic Coding }
\label{sec: arithmetic coding}
\begin{figure}[t!]
  \centering
  \includegraphics[width=\linewidth]{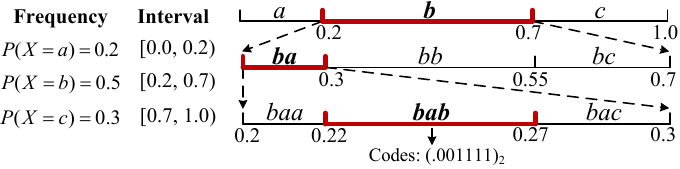}
  \vspace{-5ex}
  \caption{An Example of Arithmetic Coding \textnormal{- Arithmetic coding maps each possible string to disjoint probability intervals.}}
  \vspace{-1ex}
  \label{fig:arithmetic coding}
\end{figure}

\label{sec:structure-learning}
\begin{figure}[t!]
  \centering
  \includegraphics[width=\linewidth]{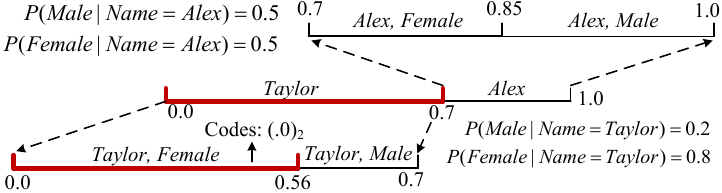}
  \vspace{-5ex}
  \caption{An Example of Column Correlation \textnormal{- Probabilities of column ``gender'' depends on the ``name'' column value.}}
  \label{fig:correlation}
  \vspace{-1ex}
\end{figure}

\begin{figure*}[t!]
  \centering
  \includegraphics[width=0.95\linewidth]{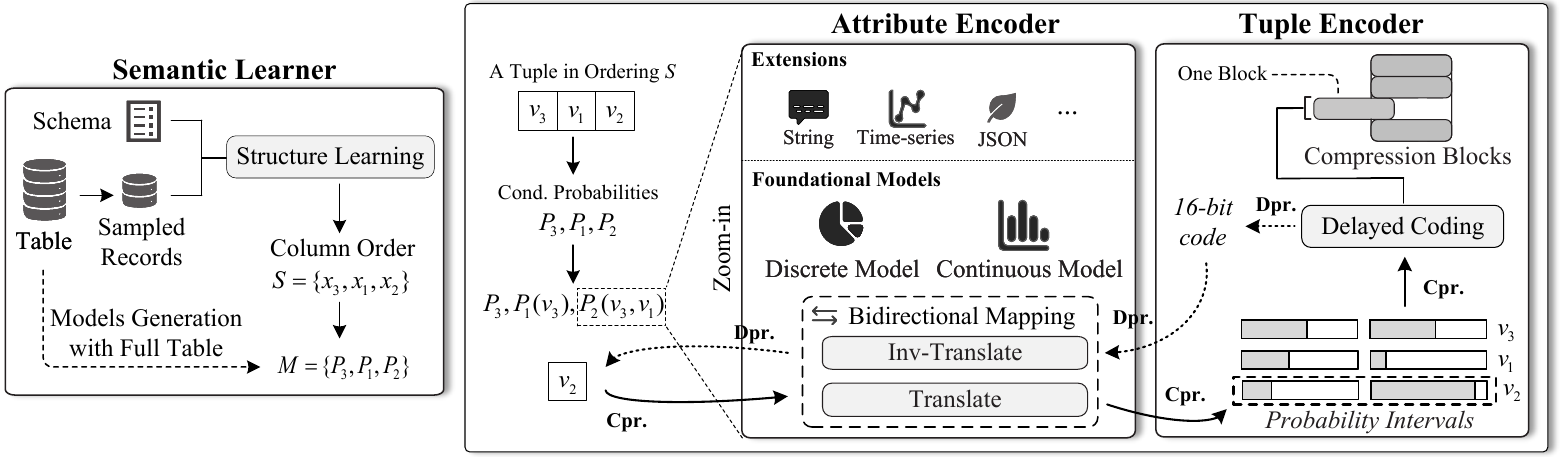}
  \vspace{-2ex}
  \caption{\revision{\work - \textnormal{Semantic Learner (SL), Attribute Encoder (AE), and Tuple Encoder (TE) are three components of \work. }}}
  \label{fig:overview}
  \vspace{-1.6ex}
\end{figure*}

Arithmetic coding is one of the most widely used entropy codings for lossless compression \cite{langdon1984introduction, DBLP:journals/tcom/ClearyW84}.
Unlike Huffman coding \cite{moffat2019huffman} that encodes symbols individually, arithmetic coding compresses the entire message into a single fraction $0 \leq q < 1$ with arbitrary precision. Compared to Huffman coding, arithmetic coding can achieve a higher compression factor. Arithmetic coding represents the current information as an interval, defined by two numbers (initially $[0,1)$). Each encoding step in arithmetic coding divides the current interval into smaller sub-intervals according to the probability distribution of the alphabet and selects the one that represents the next symbol to be encoded. For example, as shown in \cref{fig:arithmetic coding}, the probability distribution of alphabet $\{a, b, c\}$ is 0.2, 0.5, and 0.3, respectively. To encode a message ``$bab$'', we divide the initial interval $[0, 1)$ into three sub-intervals $[0, 0.2)$, $[0.2, 0.7)$, and $[0.7, 1)$ and select $[0.2, 0.7)$ to represent ``$b$''. To encode the next symbol ``$a$'', we further divide $[0.2, 0.7)$ into $[0.2, 0.3)$, $[0.3, 0.55)$, and $[0.55, 0.7)$ based on the symbol probabilities and update the current interval to $[0.2, 0.3)$ which now represents ``$ba$''. This process continues until we reach the end of the message and obtain the final interval $[0.22, 0.27)$. We then select a fraction $q$ within the final interval that has the shortest binary representation (e.g., $q = (\text{.001111})_2$) as the message's code.

\subsection{Structure Learning}
Structure learning refers to the process of identifying correlations between columns to achieve better compression~\cite{DBLP:conf/kdd/GaoP16, DBLP:conf/sigmod/IlkhechiCGMFSC20, DBLP:conf/sigmod/BabuGR01, DBLP:conf/kdd/DaviesM99}. For example, the \texttt{gender} column is often highly correlated with the \texttt{name} column in a relation. As depicted in \cref{fig:correlation}, $80\%$ of \texttt{Taylor}s are female, while $50\%$ of \texttt{Alex}es are male. Instead of using static probabilities (e.g., $50\%$ male and $50\%$ female) for the \texttt{gender} column, we model its distribution using probabilities conditioned on the \texttt{name} column: $P_{\text{gender}}(\text{Female} | \text{Name} = \text{Taylor}) = 0.8$, $P_{\text{gender}}(\text{Female} | \text{Name} = \text{Alex}) = 0.5$. Then, the more common tuple (Female, Taylor) is mapped to a larger interval $[0, 0.56)$ with a short binary code $(.0)_2$, thus achieving better compression.

We use a Bayesian network (BN) to learn the best column ordering $S$ (e.g., \{\texttt{name}, \texttt{gender}\}) for compression. The output also includes a model set $M$, where each model is a probability distribution $P_x$ for column $x$ conditioned on the values of all the columns preceding $x$ in $S$. Determining the optimal ordering \( S \) is an NP-hard problem~\cite{DBLP:journals/ker/Parsons11a}. We, therefore, use a greedy algorithm~\cite{DBLP:conf/kdd/GaoP16} that selects the column that produces the smallest compressed size conditioned on the existing columns in $S$ for each iteration.

\section{Overview}
\revision{The goal of \work is to reduce the memory footprint of an in-memory OLTP database while imposing as small performance overhead as possible. To achieve this, \work must be able (1) to handle newly inserted tuples with unseen values efficiently and (2) to deliver low latency and high compression factor for individual tuples. For (1), we build semantic models that describe the values' (conditional) probability distributions instead of using static value dictionaries (\cref{sec:semantic-models}). For (2), we propose delayed coding that offers fast and fine-grained encoding/decoding with near-entropy compression (\cref{sec:delayed-coding}). \work is optimized for single-tuple retrieval. A larger compression granularity may improve the overall compression factor, but it introduces decompression overhead for point accesses common in OLTP workloads.}

\work sits above the table storage to compress and decompress tuples while remaining transparent to the execution engine of an OLTP database. When the execution engine inserts a tuple into a relation, \work compresses that tuple before sending it to the table storage. Upon receiving a tuple-fetching request, \work retrieves the compressed tuple from storage and decompresses it. The execution engine then consumes the tuple and executes the query without being aware of \work.

As shown in~\cref{fig:overview}, \work consists of three components: Semantic Learner (SL), Attribute Encoder (AE), and Tuple Encoder (TE). 
Specifically, the SL determines the compression ordering for the columns using structure learning and generates conditional probability models for the AE.
\revision{When the tables are small, \work leaves them uncompressed. SL is triggered when the size of a table reaches a predefined threshold (default: $2^{16}$ rows).}
For compression, the AE takes in a tuple and translates the value of each attribute into an interval in $[0, 1)$ according to the models from SL. The AE then sends the sequence of intervals to the Tuple Encoder which uses delayed coding to produce a compressed record with a near-optimal size. For decompression, the tuple is first decoded into 16-bit codes at TE. The AE then invokes the \texttt{Inv-Translate} (which refers to the probability models) to recover each tuple value.

\textbf{Semantic Learner} approaches the optimal column compression ordering $S$ and a set of models $M$ through a greedy algorithm of structure learning, as described in \cref{sec:structure-learning}. To speed up the structure learning on large tables, we perform the algorithm on a set of randomly selected tuples from the table. Once the column ordering $S$ is obtained, the SL further scans the full table to generate accurate conditional probability models $P_x \in M$. $P_x$ is implemented as an unordered map from each value combination of the proceeding attribute models to a probability distribution $p_i$ of attribute $x$. We refer to $p_i$ as a semantic model. Semantic models can compress values unseen before because they estimate the value distribution rather than statically mapping values to codes in a dictionary. We introduce two fundamental types of semantic models optimized for decompression speed in ~\cref{sec:semantic-models}.

\textbf{Attribute Encoder} converts each value into an interval and vice versa according to the semantic models. \texttt{Translate} maps an attribute value $v$ to an interval $[l, r)$, $0 \leq l < r \leq 1$ (symbol-to-interval), while \texttt{Inv-Translate} takes in a code $s \in [l, r)$ and recovers the attribute value $v$ (code-to-symbol). \texttt{Inv-Translate} is critical to the decompression performance. Classic arithmetic coding performs a binary search to determine the matching interval $[l, r)$ for a code $s$ with a time complexity of $O(\log N)$ where $N$ is the number of unique values in a column \cite{MacKay2003InformationTheory}. We optimize this procedure to constant time in \work, as detailed in~\cref{sec:discrete}.

\textbf{Tuple Encoder} receives a sequence of intervals representing each value within a tuple and compresses them into a block of 16-bit integers using delayed coding. At a high level, delayed coding uses a 16-bit unsigned integer $s_{\text{int}}$ to encode each interval $[l, r)$ such that $s_{\text{int}}/2^{16} \in [l, r)$. These integers are selected judiciously so that some integer codes are stored implicitly using the redundant information of the other integer codes. In this way, delayed coding not only supports fast decoding (because of the fixed-length codes) but also achieves near-entropy compression. We introduce delayed coding in detail in~\cref{sec:delayed-coding}.

\section{Semantic Models}
\label{sec:semantic-models}
A semantic model maps a value to an interval based on the estimated (conditional) probability distribution of the values within a column. In this section, we first introduce two fundamental models for discrete/categorical columns and continuous/numeric columns, respectively. We then show in~\cref{sec:composite} how to construct models for other data types (e.g., string) using the fundamental models.


\subsection{Discrete/Categorical Model}
\label{sec:discrete}

As in classic entropy encodings, we construct the semantic model for a discrete/categorical column by counting the frequency of each symbol (i.e., value) and computing their cumulative distribution function (CDF). Each symbol is then mapped to its corresponding probability interval on the CDF. For example, the semantic model for column $\{a, b, b, a, c, b, b, b\}$ is $\{a \leftrightarrow [0, 0.25), b \leftrightarrow [0.25, 0.875), c \leftrightarrow [0.875, 1)\}$. \texttt{Inv-Translate} a code (e.g., $(.01)_2$) back to its symbol (e.g., ``$b$'') requires a binary search to find the interval that contains the code. The logarithmic complexity slows down the decompression, especially when the number of distinct values of a categorical column is large. We, therefore, propose a constant-time algorithm for the \texttt{Inv-Translate} function, inspired by the alias method~\cite{kronmal1979alias}.

\begin{figure}[t!]
    \centering
    \includegraphics[width=\linewidth]{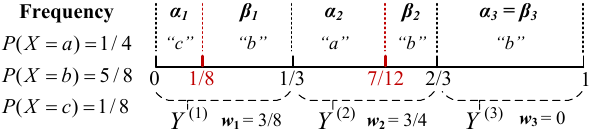}
      \vspace{-4ex}
    \caption{Interval Allocation By Pairing Symbols \textnormal{- There are three interval pairs  $\{Y^{(1)}, Y^{(2)}, Y^{(3)}\}$. In each pair, two symbols $\{(\alpha_N, \beta_N)\}$ and the symbol boundary $\{w_N\}$ are saved.}}
    \label{fig:alias_method}
      \vspace{-2ex}
\end{figure}

\textbf{Constant-Time Inv-Translate.}
Let $\pi_1, \pi_2, \cdots, \pi_N$ be the interval length (i.e., probability) of each of the $N$ symbols. Given a code $0 \le s < 1$, if $\pi_1 = \pi_2 = \cdots = \pi_N$, then $s$ belongs to the $\lfloor s \cdot N \rfloor$th interval. Therefore, the key to achieving constant-time \texttt{Inv-Translate} is to ``create'' a uniform distribution. The core idea of the algorithm is to pair the intervals so that the combined probability of each pair forms a uniform distribution.


Suppose we want to create $N$ pairs, each having a combined interval length of $1/N$. At each iteration of the algorithm, we pair the shortest interval with the longest one in the remaining intervals. When their combined interval length is greater than $1/N$, we split the longer interval in two and put the exceeded part back into the interval collection for further pairing. The algorithm terminates when there is $\le 1$ interval left in the collection. \cref{fig:alias_method} shows an example where the interval for symbol ``$b$'' is divided and mapped to three pairs. The following theorem proves the validity of this algorithm for any discrete probability distribution.


\begin{thm}
Every probability vector  $\pi_1, \cdots, \pi_N$, can be expressed as an equiprobable mixture of $N$ two-point distributions. That is, there are $N$ pairs of integers $(\alpha_1, \beta_1)$, $\cdots$, $(\alpha_N, \beta_N)$ and probabilities $w_1, \cdots, w_N$ such that
\begin{equation*}
    \pi_i = 1/N \cdot \sum_{j = 1}^N (w_j \mathds{1}_{\{\alpha_j = i\}} + (1 - w_j) \mathds{1}_{\{\beta_j = i\}}) = 1/N \cdot \sum_{j = 1}^N Y^{(j)}_i.
\end{equation*}
for $1 \leq i \leq N$, where $Y^{(1)}$,~$\cdots$,~$Y^{(N)}$ are two-point distributions.
\label{thm:alias-method}
\end{thm}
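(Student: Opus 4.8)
The plan is to prove the statement by induction on $N$, mirroring the pairing procedure described above. For the base case $N = 1$ the claim is immediate: $\pi_1 = 1$, and we take the single ``pair'' $(\alpha_1,\beta_1) = (1,1)$ with $w_1 = 1$, so that $\pi_1 = \tfrac{1}{1}(w_1 \cdot 1) = 1$.

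For the inductive step, assume the result for every probability vector of length $N-1$, and let $\pi_1,\dots,\pi_N$ be given with $\sum_i \pi_i = 1$. Since the entries average to $1/N$, the pigeonhole principle yields an index $a$ with $\pi_a \le 1/N$ and an index $b \ne a$ with $\pi_b \ge 1/N$. Build one two-point distribution on symbols $a,b$: set $(\alpha,\beta) = (a,b)$ and $w = N\pi_a$, which lies in $[0,1]$ because $0 \le \pi_a \le 1/N$. This pair contributes $\tfrac1N w = \pi_a$ to symbol $a$ and $\tfrac1N(1-w) = \tfrac1N - \pi_a$ to symbol $b$; that is, it exactly accounts for symbol $a$ and consumes $\tfrac1N - \pi_a$ of symbol $b$'s mass.

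Now define a residual vector on the $N-1$ symbols $\{1,\dots,N\}\setminus\{a\}$, leaving $\pi_k$ unchanged for $k \ne a,b$ and replacing $\pi_b$ by $\pi_b' := \pi_b - (\tfrac1N - \pi_a) = \pi_a + \pi_b - \tfrac1N$. The key checks are $\pi_b' \ge 0$ (from $\pi_b \ge 1/N \ge 1/N - \pi_a$) and that the residual entries sum to $1 - \tfrac1N = \tfrac{N-1}{N}$. Rescaling by $\tfrac{N}{N-1}$ makes the residual a genuine probability vector of length $N-1$, so the induction hypothesis writes it as an equiprobable mixture of $N-1$ two-point distributions at granularity $\tfrac{1}{N-1}$; multiplying that identity back by $\tfrac{N-1}{N}$ converts the granularity to $\tfrac1N$, giving the residual as $\tfrac1N\sum_{j=1}^{N-1} Y^{(j)}$ for suitable two-point distributions. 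Appending the pair constructed above as $Y^{(N)}$ produces $\pi_i = \tfrac1N\sum_{j=1}^{N} Y^{(j)}_i$ for all $i$.

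The main obstacle is purely the bookkeeping in the inductive step: one must check that the residual is entrywise nonnegative (so it remains a legal probability vector after rescaling) and that the rescaling correctly transports a granularity-$\tfrac{1}{N-1}$ mixture to a granularity-$\tfrac1N$ one; both collapse to the single inequality $\pi_b \ge 1/N$ plus an arithmetic identity. A point worth stating explicitly is that the argument succeeds for \emph{any} choice of a light symbol $a$ and a heavy symbol $b$, so the ``shortest with longest'' rule used in \cref{fig:alias_method} is just one valid instantiation: the choice affects which pairs appear but not the existence asserted by the theorem. Finally, the re-indexing needed to map the $N-1$ labels returned by the induction back to the original symbol indices is routine and can be absorbed into the induction hypothesis by phrasing it over an arbitrary finite index set.
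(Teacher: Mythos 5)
Your proof is correct and follows essentially the same route as the paper's Appendix~B argument: induction on $N$, pairing a light symbol ($\pi_a \le 1/N$, with $w = N\pi_a$) against a heavy one ($\pi_b \ge 1/N$), and applying the hypothesis to the residual $(N-1)$-vector summing to $(N-1)/N$. Your version is somewhat more careful than the paper's in spelling out the rescaling between granularities $1/(N-1)$ and $1/N$ and in noting that any light/heavy choice (not just min/max) works, but the underlying decomposition is identical.
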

\begin{proof}
See Appendix B.
\end{proof}

The constant-time \texttt{Inv-Translate} function is presented in~\cref{alg: inv_translate}. Given the binary mixtures $\{Y^{(j)}\}$ with parameters $\{(\alpha_{N}, \beta_{N})\}$ and $\{w_{N}\}$ defined in~\cref{thm:alias-method}, \texttt{Inv-Translate} first computes the index (i.e., $j = \lfloor s \cdot N \rfloor + 1$) of the binary mixture $Y^{(j)}$ that contains the input code $s$. Then the function determines the code's position within $Y^{(j)}$ and returns the corresponding symbol.


\begin{algorithm}[t!]
\caption{Inv-Translate}
\label{alg: inv_translate}
\DontPrintSemicolon{}
\SetKwProg{Fn}{Function}{:}{end}
\SetKwFunction{InvTranslate}{Inv-Translate}
\text{Given $\{(\alpha_{N}, \beta_{N})\}$ and $\{w_{N}\}$ defined in~\cref{thm:alias-method}}.
\Fn{\InvTranslate{s}}{
     $c = s \cdot N$            $\ \ \ \ \ \ \ \ \ \ \ $\tcc{$s \in [0, 1)$}
     $j = \lfloor c \rfloor + 1$    $\ \ \ \ \ \ \ $\tcc{Determine the index of $Y$}
     $q = c - \lfloor c \rfloor$  $\ \ \ \ \ \ \ $\tcc{Get the position in $Y^{(j)}$}

     \KwRet ($q < w_j$)$\ $\text{?}$\ \ $$\alpha_j: \beta_j$\;
}
\end{algorithm}

\subsection{Continuous/Numeric Model}
\label{sec:continuous}
Previous semantic compression algorithms use a bisection method \cite{witten1987arithmetic, DBLP:conf/kdd/GaoP16} to compress continuous values such as floating-point numbers. This approach, however, generates many low-entropy intervals, thus affecting the efficiency of the subsequent Tuple Encoding. \work, therefore, introduces a novel two-level quantization model for continuous-value columns. This model not only supports arbitrary precision to guarantee lossless compression but also leverages the distribution skew in the column for better compression.


\textbf{Two-Level Quantization Model. } The first-level quantization is based on an equi-width histogram of the values in a column. The goal at this level is to maximize compression by assigning larger intervals (i.e., shorter codes) to more frequent value ranges. Specifically, we divide the values into a predefined $T$ (e.g., $T = 512$) disjoint value ranges, each having a bucket width of $w = (\hat{v}_{\text{max}} - \hat{v}_{\text{min}})/T$, where $\hat{v}_{\text{max}}$/$\hat{v}_{\text{min}}$ is the estimated (or obtained directly from table statistics) maximum/minimum value of the column. We then obtain the frequency of each bucket by scanning the column once, and we assign each bucket $i$ an interval $[l_i, r_i)$ proportional to its frequency, similar to the categorical model.

To guarantee a lossless compression (i.e., to distinguish between the values within a bucket), we apply a second-level quantization where we divide the value range of the bucket equally into $G$ segments so that the width of each segment is smaller than or equal to the column's required precision $p$. We set $p=10^{-7}$ and $p = 10^{-17}$ for the \texttt{float} and \texttt{double} types, respectively. Besides the bucket's interval $[l_i, r_i)$, each segment $j$ in the bucket is assigned another interval $[l_{{\epsilon}_j}, r_{{\epsilon}_j})$ with an equal length of $1/G$.
\revision{If the user specifies a precision requirement for a \texttt{float}/\texttt{double} column (e.g., $2$ decimal places), we enable lossy compression by adjusting $p$ accordingly (e.g., $p = 10^{-2}$) to achieve better compression.}

Given a value $\hat{v}_{\text{min}} \leq v < \hat{v}_{\text{max}}$, the \texttt{Translate} function computes its first-level bucket index $i$ and its second-level offset $j$ according to the value-range division because $v$ can be uniquely decomposed as \(i \cdot w + j \cdot p \leq v - \hat{v}_{\text{min}} < i \cdot w + (j + 1) \cdot p\), where \(j \cdot p < w\). $v$ is then converted into two intervals: $[l_i, r_i)$ and $[l_{{\epsilon}_j}, r_{{\epsilon}_j})$. In cases where $v$ is an outlier (i.e., $v < \hat{v}_{\text{min}}$ or $v \ge \hat{v}_{\text{max}}$), the algorithm falls back to the slow traditional bisection method. Such a fall back has negligible impact on performance because outliers are usually rare. To recover a value (a value range $\le$ the column precision $p$, to be precise), we invoke the \texttt{Inv-Translate} function twice on $[l_i, r_i)$ and $[l_{{\epsilon}_j}, r_{{\epsilon}_j})$, respectively.



\subsection{Composite Models}
\label{sec:composite}


\begin{figure}[t!]
  \centering
  \includegraphics[width=\linewidth]{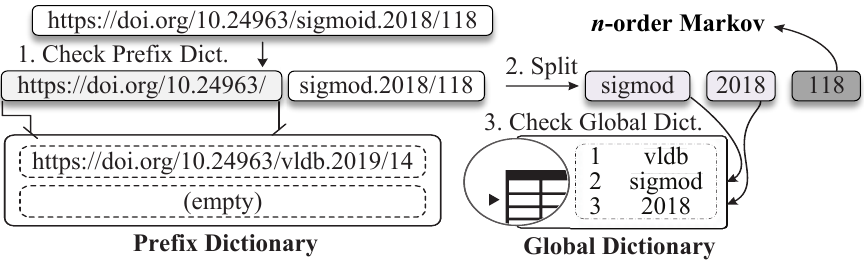}
  \vspace{-4ex}
  \caption{String Model - \textnormal{The URL sample is from the \textit{DBLP}.}}
  \label{fig:string}
    \vspace{-2ex}
\end{figure}

Our foundational models can be combined to compress complex attributes. We implement a string model as the example, as shown in \cref{fig:string}. It includes a prefix dictionary and a global dictionary. For words not covered by either dictionary, we use a Markov model to encode them letter-by-letter \cite{zpaq}. 

\textbf{Efficient \work Integration with Intervals.} The prefix dictionary compresses strings with similar prefixes and keeps a queue of the latest $K$ (e.g., $K=4$) strings. Each string is analyzed to find the index $i$ (ranging from 0 to $K-1$) of a previous string in the queue that shares the longest common prefix, and the count $h$ (an indefinite integer) of identical characters. We use a categorical model and a numeric model to estimate $i$ and $h$ distributions, respectively. Given a new string, the two models output intervals representing $i$ and $h$. This approach of interval-based representation integrates smoothly with the \work Framework.

\textbf{Adaptive Base Models for Enhanced Compression.} Following the prefix dictionary, the remaining substring is split into words using delimiters like spaces and commas. This process involves two models: (1) using a numerical model to count the words in the given substring, and (2) using a categorical model to identify each delimiter. Our model can leverage the skewed distribution of word count and delimiter usage patterns for compression. For example, most sentences have 3-10 words, and the space character is the most frequent delimiter. The final technique is the global dictionary, implemented as a categorical model. It stores words that frequently occur in sentences and is used for dictionary encoding.

Besides the string model, we also design two models: one for encoding JSON collections and another for time-series column encoding, with the latter utilizing the Autoregressive Moving Average (ARMA) \cite{box2015timeseries}. When applied to the data set \textit{Jena Climate} \cite{Jena-climate}, our time-series model achieved a 38\% better compression factor than our standard numeric model.

\section{Delayed Coding}
\label{sec:delayed-coding}
The attribute encoder generates a series of intervals, which can be encoded into a bit stream by the arithmetic coding. However, it is slow due to its variable-length codes and extensive floating-point calculations. We propose the delayed coding to address these issues. For ease of understanding, we assume that every symbol can be represented by a single, continuous interval. We relax the constraint in \cref{sec: noncontinuous_intervals}.

\subsection{Probability Representation}
\revision{We begin by investigating fast algorithms to represent and compute with probability intervals.}
Arithmetic coding is slow due to the many interval product operations required. Recall the example in \cref{sec: arithmetic coding}, when entering a sub-interval of the current interval based on the next symbol's probability, we need to compute a product of intervals. If we use a floating-point number to represent the probability, the interval product $\boxtimes$ is defined as: 
\begin{equation}
    [l_a, r_a) \boxtimes [l_b, r_b) \coloneqq [l_a + (r_a - l_a) \cdot l_b, l_a + (r_a - l_a) \cdot r_b),
    \label{equ:interval_product}
\end{equation}
where $0 \leq l_a < r_a \leq 1$, $0 \leq l_b < r_b \leq 1$. However, this design leads to floating-point calculations and the risk of floating-point underflow. An alternative is to use two integers $U, d$ to represent the probability $U / 2^d$. In this way, checking for underflow is simply inspecting the exponent of the denominator  $d$.

\textbf{Integer-based Probability Intervals.} In \work, a probability is presented as a 16-bit integer $U$, which logically represents the probability $U / 2^{16}$. In this paper, we use $[L, R)$, where $0 \leq L, R \leq 2^{16}$ are integers, to logically represent the interval $[L / 2^{16}, R / 2^{16})$. We choose to use 16 bits because using shorter integers increases the decoding overhead while using 32-bit or 64-bit integers must handle integer overflow during multiplication. For an interval whose length is smaller than $1/2^{16}$ (e.g. $[0, 1/2^{32})$), we use the product of two or more intervals to represent it, i.e. 
\begin{equation*}
    [0, 1/2^{32}) = [0, 1/2^{16}) \boxtimes [0, 32768/2^{16}) \rightarrow [0, 1), [0, 32768).
\end{equation*}

\begin{algorithm}[t!]
\caption{Inv-Translate with the Integer Probability}
\label{alg: inv_translate_integer}
\DontPrintSemicolon{}
\SetKwProg{Fn}{Function}{:}{end}
\SetKwFunction{InvTranslate}{Inv-Translate}
\text{Given $m$, $\{(\alpha_{N}, \beta_{N})\}$, $\{w_{N}\}$ defined in \cref{thm:alias-method}, where }
\text{$N = 2^{m}, m \in \mathbb{N}_+$, and $w_i$ is an 16-bit integer, for $i = 1, \cdots, N$. \;}
\Fn{\InvTranslate{s}}{
    $j \leftarrow s \ShiftRight (16 - m)$ $\ \ \ \ $\tcc{The higher $m$ bits }
    $Q \leftarrow s \ \&\  (2^{16-m} - 1)$ $\ \ $\tcc{The lower 16-$m$ bits}
    \KwRet $(Q < (w_j \ShiftRight m))$\ $\text{?}$\ \ $\alpha_j$\ : $\beta_j$\;
}
\end{algorithm}

\textbf{Inv-Translate Becomes Faster.} Although the code-to-symbol in \cref{alg: inv_translate} has constant time complexity, the ``floor'' operation slows it down. The integer-based probability can make the \texttt{Inv-Translate} faster, replacing the ``floor'' operation with bitwise operations. \cref{alg: inv_translate_integer} gives the new \texttt{Inv-Translate} algorithm with integer-based probability. Note that both the $\{w_N\}$ and the input $s$ in \cref{alg: inv_translate_integer}, using the integer-based probability, are 16-bit unsigned integers. Also, $N$ is always a power of two, a condition that can be satisfied by adding placeholder symbols with a frequency of 0 if necessary.

\subsection{Options of Fixed-length Codes}
\revision{We then introduce an algorithm to extract the code from an interval based on our integer-based probability representation and analyze the bits wasted in terms of code options.}
In \cref{sec: arithmetic coding}, we explained that for the final interval $[0.22, 0.27)$ in arithmetic coding, using just enough fractional digits to keep the number within this interval is sufficient for encoding. However, this results in variable-length codes, which can slow down decoding due to the inconsistent number of digits across intervals, requiring more branch predictions and checks \cite{said2004comparative}.

An approach to simplify the decoding process is to utilize a fixed number of bits, such as 16 bits, for encoding each interval. In this scenario, adopting integer-based probability, we define the bidirectional mapping of a semantic model as follows:
\begin{equation*}
\begin{aligned}
\text{symbol-to-interval}:&\ \{v_1, \cdots, v_N\} \rightarrow \{[L_i, R_i), 1 \leq i \leq N\},\\
\text{code-to-symbol}:&\ \{s \in \mathbb{N}, 0\leq s < 2^{16}\} \rightarrow \{v_1, \cdots, v_N\},\\
\end{aligned}
\end{equation*}
where $v_i$ denotes a unique symbol in this model, allocated with a disjoint interval $[L_i, R_i)$, $L_i$ being the lower bound and $R_i$ being the upper bound of the interval, for $1 \leq i \leq N$. By converting the interval $[0.22, 0.27)$ to $[14418, 17694)$ using integer-based probability, we can encode it using any 16-bit integer within this range. However, this method requires more than the 6 bits we use in the example of \cref{sec: arithmetic coding}. In fact, for an interval $[L, R)$, the probability of the event it represents is $(R - L) / 2^{16}$. Thus, its entropy is $16 - \log_2 (R - L)$ bits, and we waste $\log_2 (R - L)$ bits if using 16 bits to encode it. Notice that we have $R - L$ code options for this interval. The selection of a specific code option itself carries information. Since any of these options can represent the interval $[L, R)$, we can use the options to represent another interval partially. 

\subsection{Problem Formulation}
\revision{Leveraging the concept of code options, we formalize the code extraction problem as follows.}
Given a series of intervals $[L_1, R_1), \cdots,$ $[L_n, R_n)$ with $n > 0$ and $0 \leq L_i < R_i \leq 2^{16}$, we can select a 16-bit integer $s_i \in [L_i, R_i)$ to encode each interval. 
Then, can we represent a distinct interval $[L^*, R^*)$ using a sequence $(s_1, \cdots, s_n)$? 

Consider a special case where each interval is of length 2 (i.e., $R_i - L_i = 2$), we have two encoding options per interval: either $L_i$ or $L_i + 1$. These options can uniquely represent their respective intervals. Using this binary decision for each interval, we can represent parts of the distinct interval $[L^*, R^*)$. To fully encode $[L^*, R^*)$, we need a sufficient number of intervals. If we have at least 16 intervals, matching the 16 bits of $L^*$, complete representation is possible. The encoding rule is simple:  Let the code $s_i = L_i$ if the $i$-th bit of $L^*$ is 0; otherwise, $s_i$ takes the value $L_i + 1$. This way, the sequence $(s_1, \cdots, s_n)$ encodes both the series of intervals and the interval $[L^*, R^*)$, leveraging the binary encoding options of each interval.

Consider a general case where each interval's length \( k_i = R_i - L_i \) varies within the range \([1, 2^{16})\). This means each interval \([L_i, R_i)\) has \( k_i \) coding options: \(\{L_i, L_i + 1, \cdots, L_i + (k_i - 1)\}\). Any of these codes can uniquely represent its interval. Moreover, these codes can represent the distinct interval \([L^*, R^*)\) partially as well. In this case, the \(i\)-th interval offers a digit in a base \( k_i \) system. With \( n \) such intervals, they collectively form a mixed radix (base) numeral system \cite{fraenkel1985systems}, with bases \(\{k_n\}\). Assuming \(\prod_{i=1}^{n} k_i \geq 2^{16}\), the 16-bit number \(L^*\) can be converted into this mixed-base system as follows:
\begin{equation}
    a_i = L^*\ \%\ k_i, \ \ \ \ \ \ L^* = L^*\ /\ k_i ,
    \label{equ: base-conversion}
\end{equation}
for \(i = n, \cdots, 1\) in a loop. The resulting value is \(a_1 a_2 \cdots a_n\). Setting \(s_i\) to \(L_i + a_i\) gives a valid code, as \(a_i < k_i\) ensures \(L_i + a_i < R_i\). Hence, the sequence \((s_1, \cdots, s_n)\) effectively encodes both the series of intervals and the interval \([L^*, R^*)\).

\textbf{Example: 3-Digit Mixed Radix Numeral System. } Given intervals $[1, 4)$, $[2, 6)$, $[3, 10)$, forming a 3-digit mixed radix numeral system with bases $(3, 4, 7)$. This system can encode up to $3 \times 4 \times 7 = 84$ distinct states. Assume we use 4 bits to encode each interval. To encode the interval $[13, 14)$, we select $x = 13$ from this range. Decomposing 13 with the bases $(3, 4, 7)$  using \cref{equ: base-conversion} gives $13 = \underline{0} \times (4 \times 7) + \underline{1} \times 7 + \underline{6}$, leading to the indices $(0, 1, 6)$. To determine $s_1$, we select the value at index 0 in the first interval $[1, 4)$, resulting in $s_1 = 1$.  Similarly, $s_2 = 3$ and $s_3 = 9$. Therefore, the encoded bit stream for the four intervals is $($0001 0011 1001$)_2$.

\begin{figure*}[t!]
  \centering
  \includegraphics[width=\linewidth]{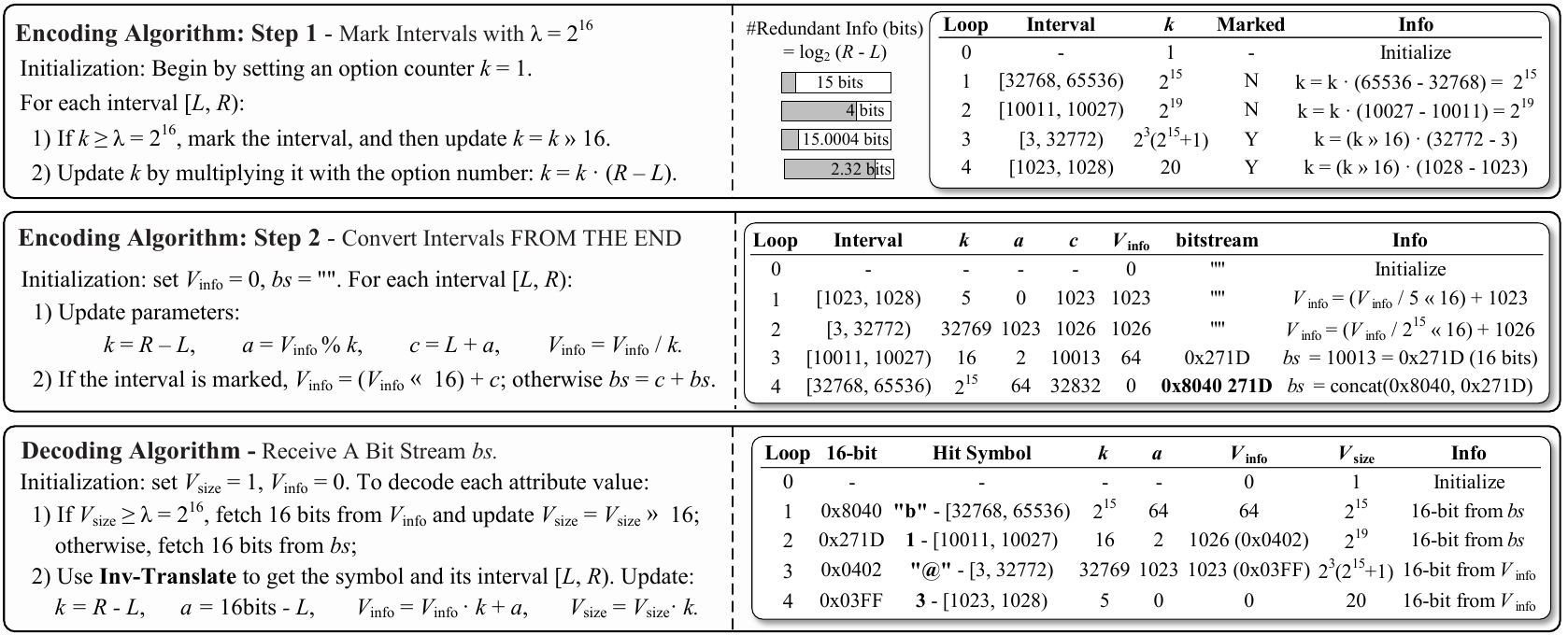}
  \vspace{-4ex}
  \caption{Delayed Coding \textnormal{- Given a tuple (``b'', 1, ``@'', 3), the attribute encoder translates it into intervals: [32768, 65536), [10011, 10027), [3, 32772), [1023, 1028). These intervals are then encoded and decoded as shown above.}}
  \label{fig:encode}
  \label{fig:decode}
  \vspace{-2ex}
\end{figure*}

\subsection{Encoding Procedure}
\begin{figure}[t!]
    \centering
    \includegraphics[width=0.9\linewidth]{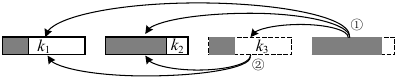}
      \vspace{-2ex}
    \caption{Recursive Encoding - \textnormal{First, the last interval is encoded by the numeral system with base $(k_1, k_2, k_3)$. Then, the last-second interval is encoded by the numeral system with base $(k_1, k_2)$. }}
    \label{fig:numeral_sys}
      \vspace{-2ex}
\end{figure}
\revision{Inspired by the mixed radix numeral system introduced above, the encoding procedure of delayed coding is essentially transforming decimal numbers into mixed radix numbers.} 
The encoding of delayed coding has two steps. First, we mark all intervals that can be represented by their former intervals' options. An interval can be marked if and only if the current option number is larger than $\lambda$ (it takes $2^{16}$ by default). Second, for each marked interval, we convert its 16-bit code into a mixed-radix number,
represented using code options of the intervals that precede it. 

\textbf{Step 1: Mark Intervals. }In \cref{fig:encode}, we illustrate the encoding process of a tuple (``b'', 1,``@'', 3) using four categorical models that convert the tuple into intervals. We use an option counter $k$, initially set to one, to track redundant information. The 1st interval provides $2^{15}$ options, updating $k$ to $2^{15}$. Next, the 2nd interval cannot be marked because the current option counter $k \leq \lambda = 2^{16}$. Note that we use fixed-length code (i.e., 16-bit) to encode each interval, and the current numeral system cannot represent a 16-bit integer. The 2nd interval increases $k$ to $2^{19}$ due to its option of 16. Currently, it is enough to mark the third interval, it consumes $2^{16}$ options and updates $k = k / 2^{16} = 2^{3}$. The 3rd interval contributes $32769$ options, updating $k = 2^{3} (2^{15}+1)$. The marking process continues; finally, the last two intervals are marked and will be transformed into mixed radix numbers, represented by their preceding intervals.

\textbf{Step 2: Convert Intervals From the End.} We convert each marked interval into a mixed-radix number in a recursive manner, as illustrated in \cref{fig:numeral_sys}. Specifically, the last two intervals are marked in the marking step. First, we convert the rightmost interval into a mixed-radix number using bases \((k_1, k_2, k_3)\). Then, the last-second interval holds the partial code of the last interval. Since the last-second interval is also marked, it is converted into a mixed-radix number with bases \((k_1, k_2)\). This approach highlights the necessity of processing intervals from the end.

We use a 64-bit integer $V_{\text{info}}$ to store the codes for marked intervals temporarily, starting with $V_{\text{info}} = 0$. We use a loop to process each interval $[L, R)$ from the end. For each step, the decimal number to be converted is $V_{\text{info}}$, and the current interval  $[L, R)$ provides $k = R- L$ options, as a base-$k$ digit. We compute the digit value $a$ and the left decimal number, using \cref{equ: base-conversion}:
\begin{equation}
\begin{aligned}
     a = V_{\text{info}}\ \%\ k, \ \ \ \ \ \ V_{\text{info}} =  V_{\text{info}}\ /\ k. \\
\end{aligned}
\label{equ:decomposition}
\end{equation}
Therefore, the 16-bit code for the interval $[L, R)$ is computed as $c = L + a$, i.e., we use the code options of it to store a digit value $a$. For the first processed interval, we get $a = 0$ because $V_{\text{info}} = 0$, but $V_{\text{info}}$ can be updated. If the interval $[L, R)$ is marked, $V_{\text{info}} = V_{\text{info}} \cdot k + c$. Otherwise, output the 16-bit code $c$ to the bit stream. The loop continues; finally, we encode the four intervals into 4 bytes.

\subsection{Decoding Procedure}
\revision{Conversely, the decoding procedure transforms the mixed radix numerals back into decimal numbers.}
There are two sources of bits to decode a tuple: the bit stream or the virtual input $V_{\text{info}}$. The decoding of each symbol has three steps: (1) retrieve a 16-bit code from $V_{\text{info}}$ if the current option number  $V_{\text{size}}$ is larger than $2^{16}$; otherwise from the bit stream; (2) obtain the desired symbol and its interval $[L, R)$ by calling the \texttt{Inv-Translate} function; (3) Update $V_{\text{info}}$ and  $V_{\text{size}}$ accordingly.

The bottom part of \cref{fig:decode} shows the decoding process. We want to decode a tuple from the bit stream \texttt{0x8040 271D}, at first, $V_{\text{info}} = 0$ , and $V_{\text{size}} = 1$. For the first attribute value, We fetch 16 bits from the bit stream, getting \texttt{0x8040}. The function \texttt{Inv-Translate} receives it and returns the symbol ``b''. We use the symbol-to-interval mapping to determine its interval, resulting in  $[L, R) = [2^{15}, 2^{16})$. Next, we   compute the digit base $k = R - L = 2^{15}$, and the digit number $a = \text{16bits} - L = 64$. Using them, we update $V_{\text{info}}$, and  $V_{\text{size}}$: 
\begin{equation*}
\begin{aligned}
    V_{\text{info}} = k \cdot V_{\text{info}} + a, \ \ \ \ \ \ V_{\text{size}} = k \cdot V_{\text{info}}.
\end{aligned}
\end{equation*}
This is just the inverse formula of \cref{equ:decomposition}. It is necessary to use $V_{\text{size}}$ to record the amount of information in $V_{\text{info}}$. For instance, with $V_{\text{info}} = 1$, a $V_{\text{size}}$ of 4 results in two virtual bits $01_2$, while a $V_{\text{size}}$ of 8 yields three virtual bits $001_2$. The second interval decodes to the symbol ``1''. When $V_{\text{size}} = 2^{19}$ ($\geq \lambda = 2^{16}$), we fetch the next 16-bit code from the virtual input, resulting in \texttt{0x0402} for the third symbol, as shown in the decoding table Loop 3 of \cref{fig:encode}. This decoding process repeats for all symbols, getting (``b'', 1, ``@", ``3'').

\subsection{Modification for Non-Continuous Intervals}
\label{sec: noncontinuous_intervals}
\revision{Up to this point, we assume that each symbol is represented by a single continuous interval. In this section, we modify our algorithm by relaxing this constraint to allow a symbol to be represented by the union of multiple non-continuous intervals.}
A non-continuous interval example is shown in \cref{fig:alias_method}, where the symbol ``$b$'' is assigned to the interval $[1/8, 1/3) \cup [7/12, 1)$, or its integer representation $[8192, 21845) \cup [38229, 65536)$. 

Non-continuous intervals, offering the same number of options as continuous ones but with different option positions, require slight modifications in delayed coding. Take a non-continuous interval with two segments \([L^{(1)}, R^{(1)})\cup [L^{(2)}, R^{(2)})\). It provides \( k = (R^{(1)} - L^{(1)}) + (R^{(2)} - L^{(2)}) \) options. To store a number \( a \in [0, k]\) using these options, we modify the selection of the 16-bit code \( c \) as follows: 
\[
c = 
\begin{cases} 
L^{(1)} + a & \text{if } 0 \leq a \leq R^{(1)} - L^{(1)}, \\
L^{(2)} + a - (R^{(1)} - L^{(1)}) & \text{if } R^{(1)} - L^{(1)} \leq a \leq k.
\end{cases}
\] 
In other words, we choose the \textit{a}-th optional code of the symbol, regardless of its integer value. Decoding involves the reverse process. For a 16-bit code within this non-continuous interval, we retrieve the stored number \( a \) as follows: if $\text{16bits} \in [L^{(1)}, R^{(1)})$, then $a = \text{16bits} - L^{(1)}$. Otherwise, $a = \text{16bits} - L^{(2)} + (R^{(1)} - L^{(1)})$. 
In other words, the 16-bit code is the $a$-th item in this non-continuous interval. This method can be extended to manage intervals with more than two segments, generating two piecewise linear functions for the computation of $c$ and $a$. Importantly, this modification has no effect on the correctness and efficiency of delayed coding, as shown in the appendix.

\subsection{Fine Granularity Compression Effectiveness }
\label{sec: fine_granularity}
\revision{We show the effectiveness and optimality of delayed coding in this section.}
In \cref{fig:encode}, there are 20 unused options (i.e., $k = 20$) after the encoding, resulting in a waste of $\log_2 20 = 4.32$ bits. The number of wasted bits can be bounded by $\log_2 \lambda$ (note that we mark an interval once the number of options is larger than $\lambda$). \cref{thm:delayed_coding} shows that as the number of intervals grows, the effectiveness of delayed coding improves, approaching the entropy. 

\begin{thm}
\label{thm:delayed_coding}
Give a series of intervals $[L_1, R_1)$, $\cdots$, $[L_n, R_n)$, where $n \geq 1$, and $L_i, R_i$ are 16-bit integers less than or equal to $2^{16}$ for all $i$. Suppose delayed coding:
\begin{enumerate}
    \item Marks an interval if and only if the current option number is larger or equal to $\lambda$, where $\lambda \geq 2^{16}$.
    \item Encodes every $\zeta$ intervals as a bit stream, where $0 < \zeta \leq n$.
\end{enumerate}
Thus, the number of used bits $L^{n, \lambda, \zeta}$ is bounded by
\begin{equation*}
    L^{n, \lambda, \zeta} \leq n \cdot C + (\lfloor n / \zeta \rfloor + 1) \cdot \log_2 \lambda + n\cdot \log_2 (1 - 65535 / \lambda)^{-1},
    \label{loss}
\end{equation*}
where $n \cdot C$ is the entropy of all intervals. Further, for a sufficiently large $n$, by setting $\lambda = \zeta = n$, we have $L^{n, \lambda, \zeta}/ (n \cdot C) \to 1$ as $n\to +\infty$. 
\end{thm}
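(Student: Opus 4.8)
The plan is to track the evolution of two quantities through the encoding loop: the running option counter $k$ (call it $V_{\text{size}}$ on the decoder side) and the number of bits actually emitted so far. I would process the $n$ intervals from the end, exactly as the encoder does, and classify each interval as either \emph{marked} (its code is absorbed into $V_{\text{info}}$) or \emph{unmarked} (its 16-bit code is flushed to the bit stream). The key accounting identity is that when an interval $[L_i,R_i)$ with $k_i = R_i - L_i$ options is processed, the option counter is multiplied by $k_i$, and then, whenever the counter exceeds $\lambda$ (or a $\zeta$-block boundary forces a flush), a 16-bit chunk is written and the counter is divided by $2^{16}$. So the total number of emitted bits equals $16$ times the number of flushes, and the number of flushes is governed by how many times the product $\prod k_i$ overshoots powers of $2^{16}$, with at most $\lfloor n/\zeta\rfloor + 1$ extra flushes forced by block boundaries plus one final flush of the residual $V_{\text{info}}$.

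First I would establish the lower-order bookkeeping: between consecutive flushes the counter grows from something in $[1, 2^{16})$ by successive factors $k_i \le 2^{16}-1 = 65535$, so each interval contributes $\log_2 k_i$ bits of "capacity consumed," and a flush is triggered once the accumulated $\sum \log_2 k_i$ over a run exceeds $16$. Summing over all $n$ intervals, the number of 16-bit flushes is at most $\bigl(\sum_{i=1}^n \log_2 k_i\bigr)/16$ plus a correction for the partial fill at the end of each $\zeta$-block and at the very end. Now $\sum_i \log_2 k_i$ is \emph{not} quite $n\cdot C$ (the entropy), because $C = \frac{1}{n}\sum_i \log_2\frac{2^{16}}{k_i}$ is measured against the full $2^{16}$ range whereas delayed coding only "charges" $\log_2 k_i$; but the wasted bits per interval are exactly $16 - \log_2 k_i$, and this is where the $(1 - 65535/\lambda)^{-1}$ factor enters — I would show that because an interval is only marked when the counter is already $\ge \lambda$, the fraction of an interval's $\log_2 k_i$ capacity that is "lost" to rounding at a flush boundary is at most $65535/\lambda$ of the useful content, giving a geometric-series bound $\sum_i \log_2 k_i \le n\cdot C + n\log_2(1-65535/\lambda)^{-1}$ after rearranging. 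Combining this with the $16\cdot(\text{number of flushes})$ estimate and the $(\lfloor n/\zeta\rfloor + 1)\log_2\lambda$ term for forced/residual flushes yields the stated inequality.

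For the asymptotic claim, I would set $\lambda = \zeta = n$ and divide through by $n\cdot C$. The middle term becomes $(\lfloor n/n\rfloor + 1)\log_2 n / (nC) = O(\log n / n) \to 0$. The third term is $n\log_2(1-65535/n)^{-1}/(nC) = \log_2(1-65535/n)^{-1}/C$, and since $65535/n \to 0$ we have $\log_2(1-65535/n)^{-1} \to 0$, so this ratio vanishes as well (here one needs $C$ bounded away from $0$, which holds unless every interval has full length $2^{16}$, a degenerate case where there is nothing to compress). Hence $L^{n,\lambda,\zeta}/(n\cdot C) \to 1$.

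The main obstacle I anticipate is the second step: rigorously controlling the rounding loss at flush boundaries to extract precisely the $\log_2(1-65535/\lambda)^{-1}$ factor rather than a cruder bound. The subtlety is that the "lost" capacity at each flush depends on the current counter value and the sizes of the intervals straddling the boundary, and one must argue uniformly that the overshoot past $2^{16}$ is at most a factor $65535$, while the counter sits at $\ge \lambda$ when marking begins — so the lost fraction telescopes into a geometric series with ratio $65535/\lambda < 1$. Getting the indexing of "which interval's bits land in which flush" exactly right, especially at the interaction between the $\lambda$-triggered flushes and the $\zeta$-block-triggered flushes, is the fiddly part; everything else is arithmetic on logarithms.
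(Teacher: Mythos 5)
Your proposal follows essentially the same route as the paper's Appendix D.2 proof: decompose the excess over entropy into (i) an integer-division loss at each flush, bounded by $\log_2\bigl(\lambda/(\lambda-65535)\bigr)$ per flush because the option counter is at least $\lambda$ and the discarded remainder is below $2^{16}$, summed over at most $n$ flushes, and (ii) the unused residual capacity at each of the $\lfloor n/\zeta\rfloor+1$ block ends, each costing at most $\log_2\lambda$ bits. Your intermediate inequality $\sum_i\log_2 k_i \le nC + n\log_2(1-65535/\lambda)^{-1}$ is miswritten (the correct accounting is that the emitted length $16n - 16\cdot\#\{\text{marked}\}$ exceeds $nC = 16n - \sum_i\log_2 k_i$ by at most the two loss terms), but the mechanism you describe is the right one, and your deterministic bound of at most $n$ flushes is in fact cleaner than the paper's appeal to a distributional assumption on the interval lengths.
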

\begin{proof}
 See Appendix D.2.
\end{proof}
\textbf{Summary: } Delayed coding uses a fixed number of bits to encode each interval. It is based on the insight that altering the redundant information in an interval does not affect its symbol retrieval. \cref{thm:delayed_coding} reveals that delayed coding has a near-entropy compression factor with fine compression granularity.

\section{Compression Microbenchmarks}
\begin{table}[t!]
    \small
    \centering
    \caption{Data sets \textnormal{- Unmarked data sets come from Public BI Benchmark \cite{DBLP:conf/sigmod/VogelsgesangHFK18} or earlier semantic compression research \cite{DBLP:conf/kdd/GaoP16, DBLP:conf/sigmod/IlkhechiCGMFSC20}.}}
    \vspace{-2ex}
    \begin{tabular}{clrcc}
    \toprule
       \multicolumn{1}{l}{\textbf{Group}} & \textbf{Data sets}          & \textbf{\#Rows}     & \textbf{\#Cols}    & \textbf{Row Length}  \\
    \midrule
\multirow{3}{*}{\rotatebox[origin=c]{90}{Numeric}}     & Corel                                         & 68,040     & 93  & 820 byte\\
                             & Jena Climate \cite{Jena-climate} & 420,551       & 14            & 138 byte                     \\
                             & Cars                                          & 344,287    & 155 & 393 byte \\
    \midrule
\multirow{4}{*}{{\rotatebox[origin=c]{90}{Categorical}}} & Forest Cover                                  & 581,012    & 55  & 127 byte \\
                             & US Census 1990                                & 2,458,285  & 69  & 145 byte\\
                             & Food                                          & 5,216,593  & 5   & 22 byte\\
                             & Bimbo                                         & 20,259,279 & 12  & 54 byte\\
    \midrule
\multirow{3}{*}{\rotatebox[origin=c]{90}{String}}& Yale Languages                                & 5,762,082  & 30  & 284 byte\\
                             & Medicare                                      & 8,645,072  & 26  & 229 byte\\
                             & Arade                                         & 9,888,775  & 11  & 88 byte \\
    \bottomrule
    \end{tabular}
    \vspace{-2ex}
    \label{tab:datasets}
\end{table}

\begin{figure*}[t!]
  \centering
  \includegraphics[width=0.97\linewidth]{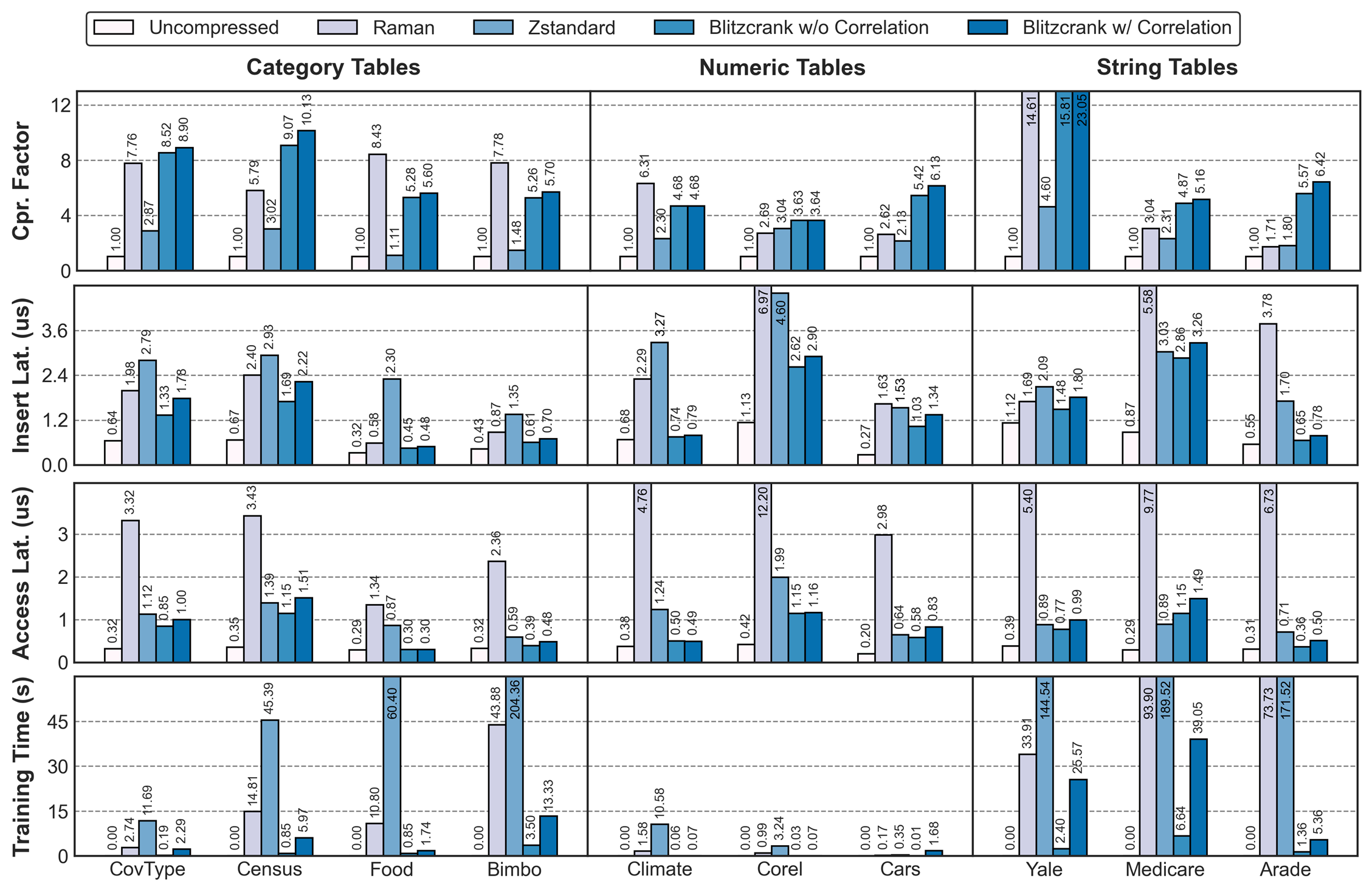}
  \vspace{-3ex}
  \caption{\revision{Compression Evaluation \textnormal{- We report the compression factor, insert/access latency, and training time of all compressors.}}}
  \vspace{-1.5ex}
  \label{fig:compression_ra}
\end{figure*}

We evaluate \work in the next two sections. First, using 10 real tables, we compare \work with modern compressors. This comparison focuses on compression factors and fast random tuple access from compressed storage (\cref{exp: compression_evaluation}). Then, we provide a breakdown of the \work structure learner (\cref{exp: sensitivity_to_sampling_number}). Following this, we compare delayed coding with asymmetric numeral systems (\cref{exp: entropy_coding}). Finally, we optimize the random access performance by analyzing the compression block size (\cref{exp: sensitivity_to_granularity}). 

\textbf{Baselines.} We evaluate \work against Zstandard \cite{DBLP:journals/rfc/rfc8878zstd} and Raman's approach \cite{DBLP:conf/vldb/RamanS06}: (1) Zstandard is a real-time compression system. It has a training mode, designed for compressing many small files. This mode creates a ``zstd-dictionary'' from all files and uses it to compress each file independently. \revision{We use the open-source Zstandard (v1.5.1) in C++, setting the ``zstd-dictionary'' capacity to the recommended 110 KB and using the default compression level.} (2) Raman's method \cite{DBLP:conf/vldb/RamanS06} focuses on tuple compression. It considers correlations between columns and combines Huffman coding and delta encoding to achieve a high compression factor.
\revision{We implemented Raman's approach in C++ using the default column ordering of an input table.}

We exclude DeepSqueeze \cite{DBLP:conf/sigmod/IlkhechiCGMFSC20} because it does not support high cardinality columns and is not open-sourced. We do not include FSST \cite{boncz2020fsst} and other lightweight techniques \cite{damme2017lightweight, abadi2006integrating, abadi2013design}, because they are not for row-stores. In the appendix, we also evaluate \work against \revision{the open-sourced (in C++)} Squish \cite{DBLP:conf/kdd/GaoP16} and Gzip \cite{DBLP:journals/tit/ZivL77gzip} for the table archive task. Our method is $20\times$ faster than Squish and offers $2\times$ higher compression factors compared to Gzip. 

\textbf{\work Setting. } \work samples $2^{15}$ tuples for structure learning, with detailed sensitivity analysis provided in \cref{exp: sensitivity_to_sampling_number}. For delayed coding, each tuple is individually encoded for the optimal access latency. We set $\lambda = 2^{16}$ to maximize the compression factor, as detailed in \cref{thm:delayed_coding}. We evaluate two \work variants: one utilizes column correlation for compression (\work w/ Correlation), while the other does not (\work w/o Correlation).

\textbf{Data Sets. } \cref{tab:datasets} shows the data sets. Besides the data sets from previous studies \cite{DBLP:conf/sigmod/VogelsgesangHFK18, DBLP:conf/kdd/GaoP16, DBLP:conf/sigmod/IlkhechiCGMFSC20}, we also use \textit{Jena Climate}, which consists of 14 time-series columns \cite{Jena-climate}. We classify each data set into categorical, numeric, or string types. Specifically, we calculate the proportion of each attribute type in the total data set size and select the type with the highest proportion as the representative group for that data set.

\textbf{Experimental Setup. } We use three metrics to measure compression performance: compression factor, throughput, and random access latency. The throughput represents the amount of data processed per unit of time for a given compression/decompression task. Random access latency is the time required to retrieve a random record. We conduct our experiments on a machine equipped with two $\text{Intel}^\text{\textregistered}$ Xeon 8375C (32 $\times$ 2 cores) and $512$ GB RAM. The disk we use is an $\text{Intel}^\text{\textregistered}$ SSD D5-P5530 (1 TB).
\revision{We use Debian GNU/Linux 11 and GCC 10.2 with \texttt{-O3} enabled.}
All microbenchmarks are conducted with a single thread.

\subsection{Compression Evaluation}
\label{exp: compression_evaluation}
We evaluate \work 
on in-memory tables (constructed using the data sets above)
with each tuple compressed separately.
The compressed tuples are organized using a primary-key index
(implemented using a simple C++ vector)
where the primary keys are monotonically increasing integers.
We use YCSB (workload C) with a Zipf distribution
to generate the random-access workloads \cite{cooper2010benchmarking}.
Specifically, for each data set, we first compress and insert 5 million tuples into the in-memory table
and then execute 1 million point queries, each involving decompressing a particular tuple.
We report the average latencies for compression-insertion and random access separately. 
We also record the size of each in-memory table after insertion to calculate the compression factor.
For each compressor, we first train its model over the corresponding data set if required by the algorithm.

\cref{fig:compression_ra} shows the results, including compression factor, latency, and training time, across various data sets on the x-axis. \work has the highest compression factor for 7/10 tables, and offers the lowest latency for 9/10 tables among all compressors. This is because \work models columns in a semantic way and uses fixed-length code for encoding. Raman's approach has the highest compression factor for the remaining 3/10 tables because tuples of these tables have low entropy; each tuple requires on average just a few bits for encoding (e.g., 2.6 bytes for a \textit{Bimbo}'s tuple). In this case, using fixed-length codes is less efficient. However, Raman's approach is slow for accessing tuples, because its variable-length code for each attribute leads to additional checks when decoding. Zstandard falls short for both the compression factor and the latency, because it relies on long contexts, at least 4KB, to build an effective dictionary. However, the length of a single tuple is insufficient to meet this 4 KB requirement. 

We advise using \work w/o Correlation in most cases.
Capturing column correlations improves the compression factor in the sacrifice of the access latency and the model training time. The training time is often exponential to the number of categorical columns, but a longer training time does not necessarily guarantee better performance. A detailed analysis of the semantic learner is given in \cref{exp: sensitivity_to_sampling_number}.

\begin{figure*}[t!]
    \centering
      \begin{minipage}[b]{0.455\textwidth}
        \centering
        \includegraphics[width=\linewidth]{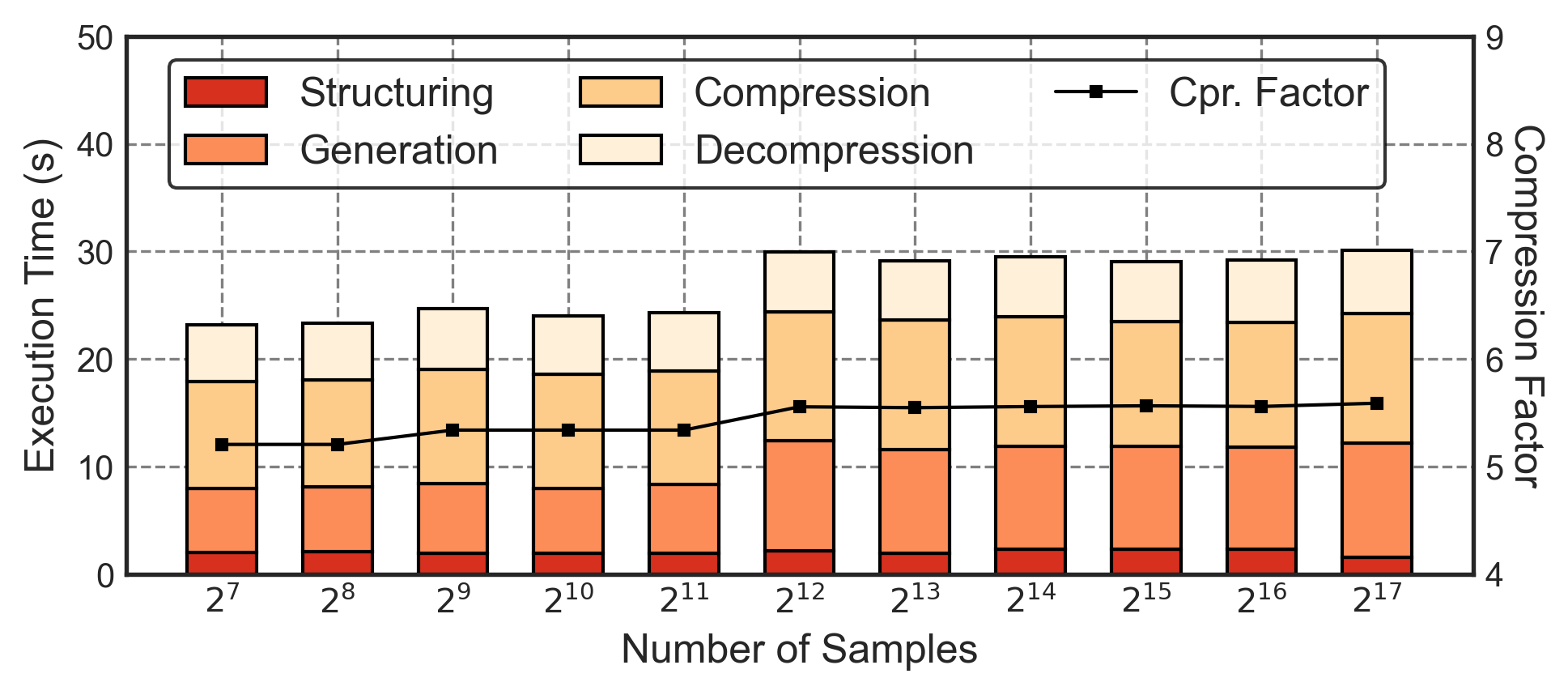}
        \vspace{-6.5ex}
        \caption*{(a) \textnormal{No Correlation Found Example: \textit{Bimbo}} }
      \end{minipage}
      \hspace{3ex}
      \begin{minipage}[b]{0.455\textwidth}
        \centering
        \includegraphics[width=\linewidth]{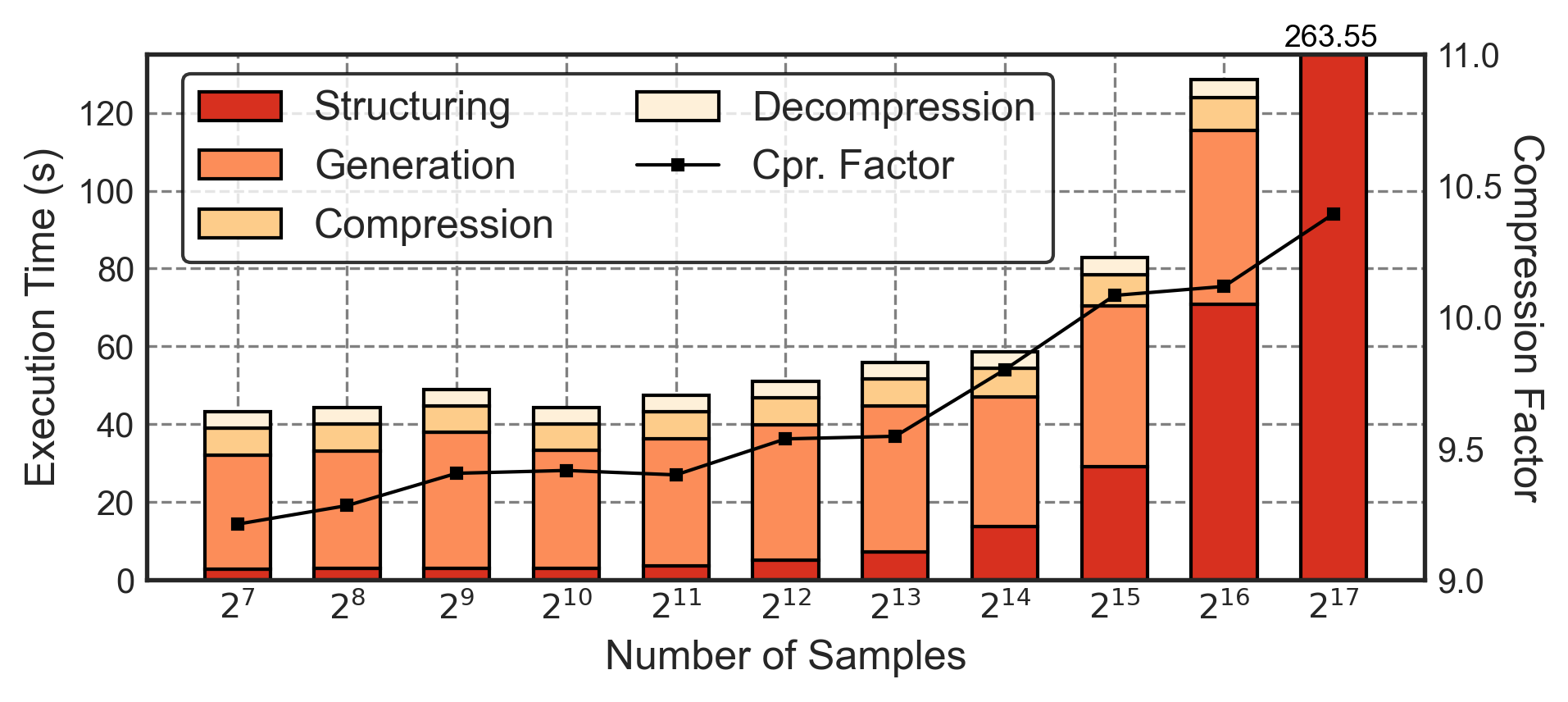}
        \vspace{-6.5ex}
        \caption*{(b) \textnormal{Correlation Found Example: \textit{Census}} }
      \end{minipage}
      \vspace{-2ex}
      \caption{\revision{Breakdown of \work Distribution Learner \textnormal{- Vary the \#samples and evaluate the performance.}}}
      \label{fig:sample number}
      \vspace{-2.5ex}
\end{figure*}

\begin{figure}[t!]
    \centering
      \includegraphics[width=\linewidth]{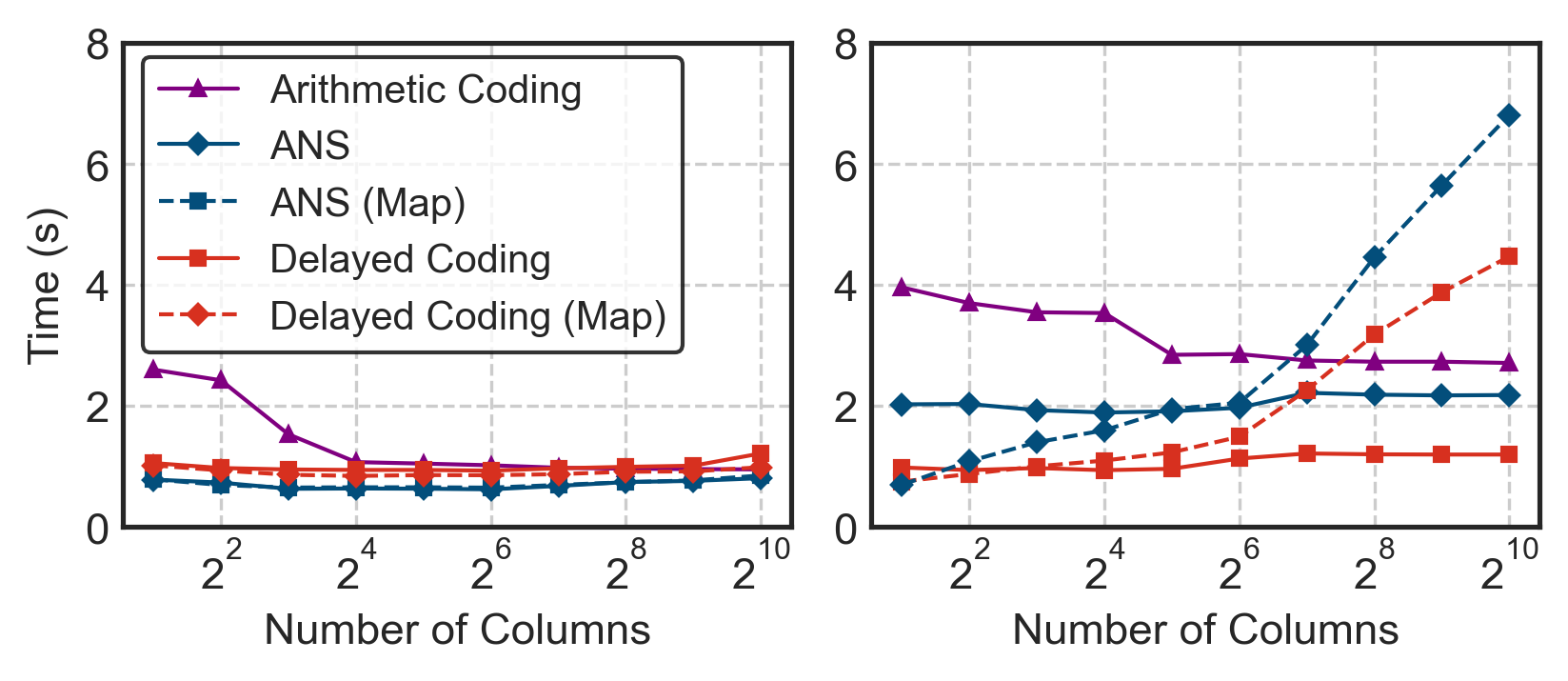}
      {\raggedright
          \hspace{4ex}
          \begin{minipage}{0.05\linewidth}
              \hspace{1ex}
          \end{minipage}
          \begin{minipage}{0.45\linewidth}
                \centering
                \vspace{-6ex}
                \caption*{(a) \textnormal{Compression} }
          \end{minipage}
          \hspace{0ex}
          \begin{minipage}{0.45\linewidth}
                \centering
                \vspace{-6ex}
                \caption*{(b) \textnormal{Decompression} }
          \end{minipage}
      }
      \vspace{-3ex}
      \caption{Entropy Coding Running Time \textnormal{- Vary the \#column of tables and record the processing time.}}
      \vspace{-2ex}
      \label{fig:entropy}
\end{figure}
\subsection{Sensitivity to Sampling Number} 
\label{exp: sensitivity_to_sampling_number}
\work randomly selects a subset of samples for structure learning. We now investigate the sensitivity to the sampling number.
\revision{We use \work w/ Correlation to compress and decompress the whole data sets with the compression granularity being a single tuple (i.e., delayed coding encodes each tuple into a separate compressed block). We keep this tuple-level compression granularity or the remaining experiments unless specified otherwise.}
We select two representative data sets \textit{Bimbo} and \textit{Census} for the analysis. Because the structure learning influences the complexity of the model generated, which further affects the compression speed, we report the duration of each stage within \work: structure learning (Structuring), model generation (Generation), compression, and decompression. 

\cref{fig:sample number} shows the results. We vary the \#samples in structure learning on the x-axis (log-scaled) and record the compression factor and the running time of each stage. 
For \textit{Bimbo} in \cref{fig:sample number}a, the sample number has little effect on the compression factor and running time -- the learner cannot learn many dependencies. Structure learning time increases slightly with sample number; this is expected since more samples need scanning. \textit{Census} in \cref{fig:sample number}b shows a different pattern: the compression factor increases with the \#samples -- the learner finds interesting dependencies and generates more complex models. Therefore, we need more time to generate models. The disparity between the two patterns is due to different column counts: \textit{Census} has 69 columns compared to \textit{Bimbo}'s 12. 
More columns typically indicate more complex dependencies, leading to higher access latency and longer training time. 
Considering the running time and performance, we set the default sample number to $2^{15}$ for \work. 
\revision{{R4.W3, R4.D5}Although considering column correlation may improve the compression factor of \work on a few data sets (e.g., \textit{Yale}) with a small impact on the access latency, we opt to use \work w/o Correlation for the remaining experiments unless otherwise specified.}

\subsection{Entropy Coding Running Time}
\label{exp: entropy_coding}
We evaluate the delayed coding with arithmetic coding and asymmetric numeral systems (ANS) \cite{DBLP:journals/corr/abs-2106-06438-ans}.
We implement arithmetic coding with the integer-based probability representation; 
and use finite-state entropy \cite{ans_impl} as the implementation of the ANS. 
We integrate them into \work and use the same models for distribution estimation. 
In this experiment, we create 64 MB relational data sets with different numbers of columns. 
Each column has a uniform distribution of cardinality 255 with values sampled from ASCII codes.
We vary the column number from 2 to 1024 to record the compression and decompression times of all algorithms. 

In \cref{fig:entropy}, we compare delayed coding, arithmetic coding, and ANS, represented by the solid lines. 
Delayed coding is $2\times$ faster than ANS for the decompression speed, with arithmetic coding being the slowest. This is because delayed coding has a constant-time decoding complexity. 
Arithmetic coding, on the other hand, relies on binary search for code-to-symbol mapping, operating in $O(\log N)$. 
An improvement for ANS involves using an unordered map from codes to symbols to accelerate decoding \cite{DBLP:journals/corr/abs-2106-06438-ans}. 
We then implement such a decoding map for both ANS and delayed coding, denoted by the dotted lines. \cref{fig:entropy} shows that the delayed coding is still faster than ANS. Decompression time for ANS is similar to delayed coding with few columns but slows as column numbers increase due to cache burden. Storing decoding maps increases the cache miss rate from 0.04\% to 0.132\% when columns exceed 108.

\begin{figure}[t!]
    \centering
    \includegraphics[width=0.95\linewidth]{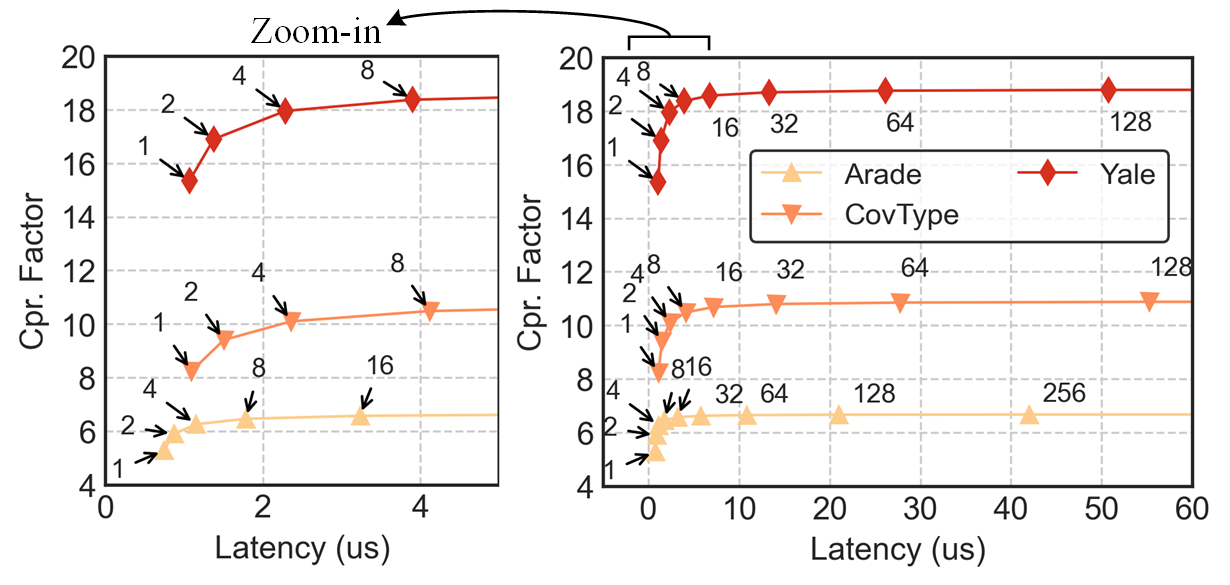}
     {\raggedright
          \hspace{4ex}
          \begin{minipage}{0.05\linewidth}
              \hspace{1ex}
          \end{minipage}
          \begin{minipage}{0.35\linewidth}
                \centering
                \vspace{-5.5ex}
                \caption*{(a) \textnormal{Zoom-in} }
          \end{minipage}
          \hspace{0ex}
          \begin{minipage}{0.52\linewidth}
                \centering
                \vspace{-5.5ex}
                \caption*{(b) \textnormal{Latency vs. Cpr. Factor} }
          \end{minipage}
      }
    \vspace{-2.5ex}
    \caption{Compression Granularity \textnormal{- Vary the block size from 1 to 128 tuples, and write it next to the marker for each trial.}}
    \vspace{-2ex}
    \label{fig:latency}
\end{figure}
\subsection{Sensitivity to Compression Granularity}
\label{exp: sensitivity_to_granularity}
We claim that delayed coding has near-entropy performance with fine compression granularity in \cref{sec: fine_granularity}. 
In this part, we investigate the effect of compression block size in practice. 
We present three data sets for this experiment: \textit{Arade}, \textit{Cover Type}, and \textit{Yale Language}. 
We omit other data sets because they produce similar results. 
In each trial, we vary the block size (\#tuple) to see 
how they affect the compression factor and the random access latency. 
We measure the access latency by repeating the process one million times.
This experiment is conducted in memory.

\cref{fig:latency} shows that the delayed coding has a high compression factor with fine compression granularity. 
For each table, the compression factor reaches a plateau when the compression block size exceeds 8 tuples. 
This indicates a trade-off between compression factor and latency within the 0 to 8 tuple range, 
allowing the users to select their preferred block sizes. 
Since OLTP databases usually prioritize low latency and \work has a high compression factor even at a block size of one tuple, we set this as the default.

\begin{figure}
    \centering
    \includegraphics[width=\linewidth]{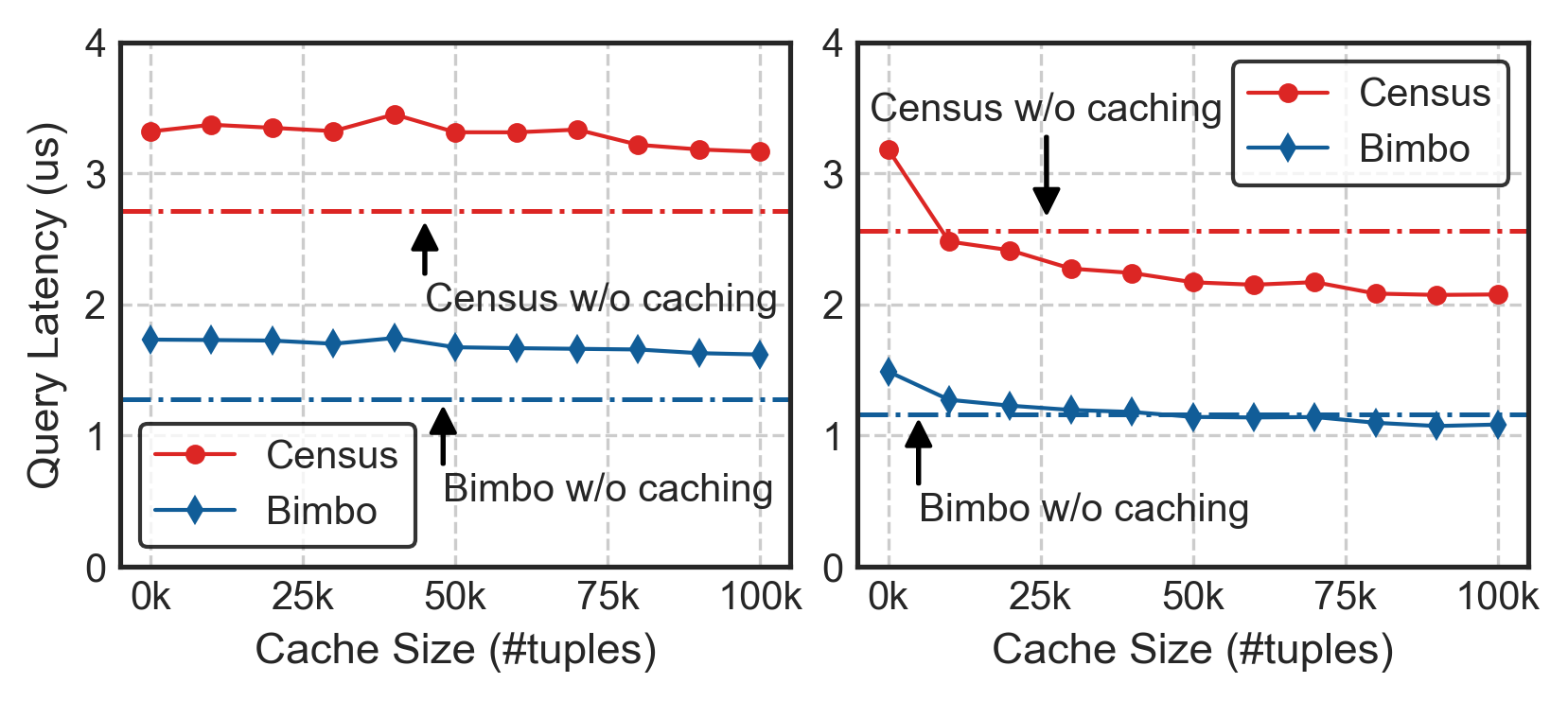}

    {\raggedright
        \hspace{0ex}
        \begin{minipage}{0.05\linewidth}
            \hspace{0.5ex}
            \vspace{-6ex}
        \end{minipage}
        \begin{minipage}{0.45\linewidth}
            \centering
            \vspace{-6ex}
            \caption*{(a) \textnormal{Uniform Distribution} }
        \end{minipage}
        \hspace{0ex}
        \begin{minipage}{0.45\linewidth}
            \centering
            \vspace{-6ex}
            \caption*{(b) \textnormal{Zipfian Distribution} }
        \end{minipage}
    }
    \vspace{-2.5ex}
    \caption{\revision{Effect of a Fast-Path LRU Cache \textnormal{- The workloads are based on YCSB Workload F (read-modify-write). Dashed lines are \work without caching, while the solid lines are \work with fast-path cache enabled.}}}
    \vspace{-2ex}
    \label{fig:access_cache}
\end{figure}

\revision{
\subsection{Fast Path for Tuple Updates}
\label{exp: update_fast_path}
We evaluate a fast path for tuple updates in this section. Specifically, we implemented an LRU write-back cache to buffer the most recently accessed tuples in their decompressed form. Normally, a tuple update involves loading the compressed tuple, decompressing it, modifying the tuple, and re-compressing the updated tuple. With the cache, the workload flow starts by looking up the cache. If the target (decompressed) tuple is already in the cache, we modify the tuple directly. Otherwise, we first decompress the target tuple and insert it into the cache. We evaluate the cache using YCSB workload F (read-modify-write) under Uniform and Zipfian query distributions on data sets \textit{Census} and \textit{Bimbo}. The table initially contains five million tuples, and we execute one million read-modify-write queries on the table.}

\revision{As shown in~\cref{fig:access_cache}, adding the fast-path cache slows down uniformly distributed queries because cache hits are rare, and cache lookup and maintenance bring overhead. For Zipf-distributed queries, having the fast-path cache improves the query performance because as the cache size increases, more and more tuple updates are performed directly in the cache without decompression. We conclude that a fast-path cache can benefit skewed workloads when using \work for compression.}

\section{System Evaluation}
\revision{We integrated \work (w/o Correlation) into Silo \cite{tu2013speedy} and measured the end-to-end performance using the TPC-C benchmark~\cite{tpcc}.} Silo is an OCC-based serializable database designed for excellent performance at scale on large multi-core machines. Silo uses the Masstree\cite{mao2012cache}  for its underlying indexes, and has a very high transaction throughput, achieving more than 1 million txns/s on the standard TPC-C workload in our experiments.

Each record in a table is compressed separately. The compressors under test are Uncompressed, Zstandard, Raman, and \work with the corresponding row-stores named Silo, ZstdDB, RamanDB, and BlitzDB, respectively. To access a record by primary key, Silo walks the index tree using that key to find the compressed record and then decompresses it into a list of attributes. 

\begin{table}[t!]
\centering
\caption{Data Generation Methods \textnormal{- Addresses are generated by ZIP code conventions \cite{us_zip_codes_database}; Phone and district are produced by populating a predefined format with random numbers. }}
\vspace{-2ex}
\begin{tabular}{lcc}
\toprule
\textbf{Column} & \textbf{Method} & \textbf{Source/Format} \\
\midrule
C\_FIRST       & Sampling            & US Baby Names \cite{us_baby_names} \\
C\_STREET      & Sampling            & Open Addresses \cite{real_address} \\
C\_DATA        & Sampling            & City Max Capita \cite{DBLP:conf/sigmod/VogelsgesangHFK18} \\
S\_DATA        & Sampling            & Corporations \cite{DBLP:conf/sigmod/VogelsgesangHFK18}  \\
\midrule
C\_STATE       & Sampling            & List of US States  \\
C\_CITY        & Conditional         & Cities within C\_STATE \\
C\_ZIP         & Conditional         & ZIP Codes within C\_CITY  \\
\midrule
C\_PHONE       & Format-Based& ``(XXX) XXX-XXXX''  \\
S\_DIST        & Format-Based& ``dist-str\#XX\#XX\#XXXX''\\
\bottomrule
\end{tabular}
\label{tbl:data_generation}
\end{table}

According to the TPC-C specification, some columns are filled with random bytes which are incompressible. We substitute these bytes with data that either follows real-world patterns or is sampled from the collected corpus. \cref{tbl:data_generation} details our data generation approach. The compression factors for the new Customer and Stock tables are 3.44 and 5.57, respectively.

\begin{figure*}[t!]
    \centering
    \begin{minipage}[b]{0.48\textwidth}
        \centering
        \includegraphics[width=\linewidth]{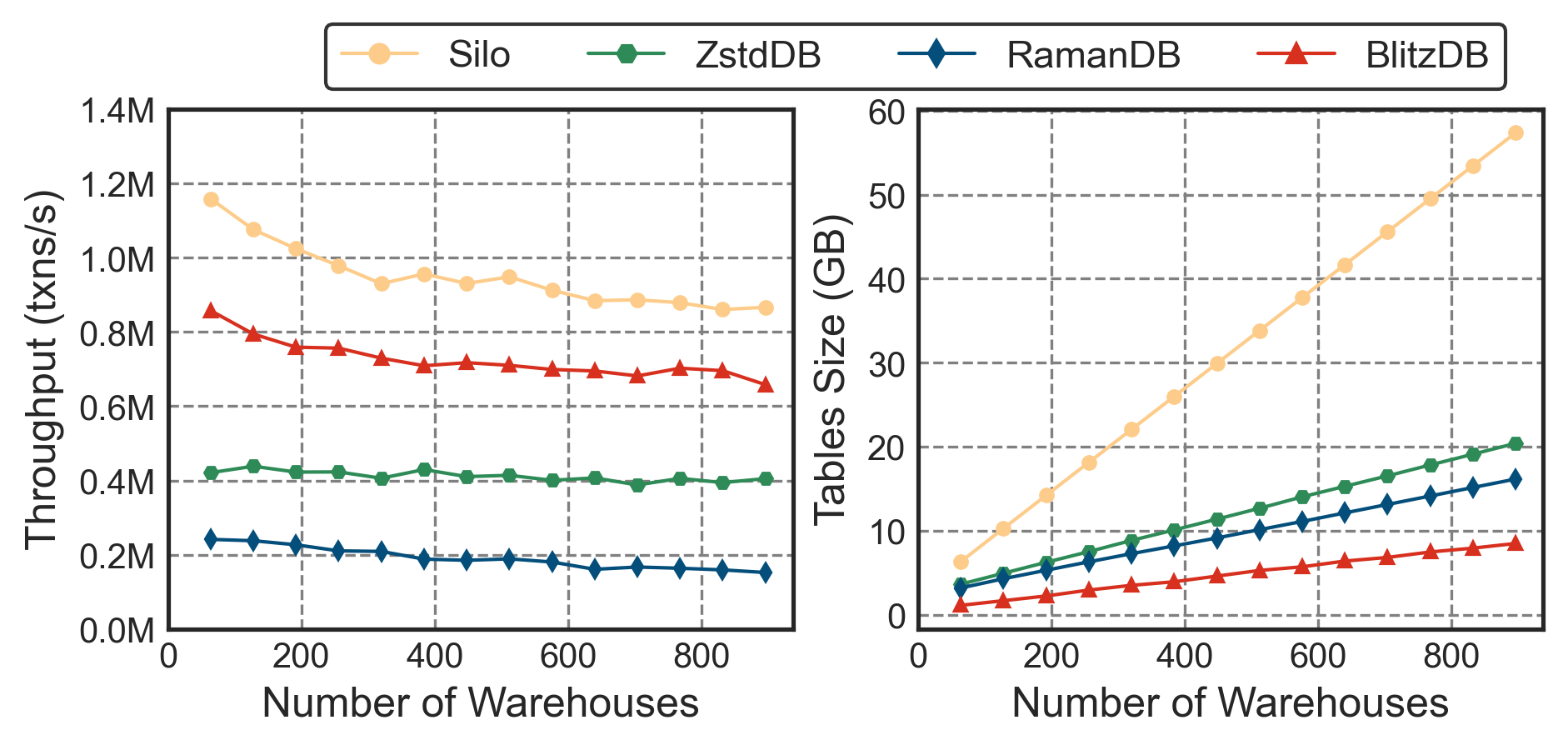}
        {\raggedright
          \begin{minipage}{0.1\linewidth}
          \vspace{-6ex}
          \end{minipage}
          \hspace{3ex}
          \begin{minipage}{0.45\linewidth}
            \vspace{-6ex}
            \caption*{(a) \textnormal{Throughput} }
          \end{minipage}
          \hspace{0ex}
          \begin{minipage}{0.45\linewidth}
            \vspace{-6ex}
            \caption*{(b) \textnormal{DB Size} }
          \end{minipage}
      }
      \vspace{-2.5ex}
      \caption{TPC-C Workload \textnormal{- In each trial, we use 16 threads and each thread executes 1 million transactions. }}
      \label{fig:tpcc_base}
    \end{minipage}
    \hspace{2ex}
    \begin{minipage}[b]{0.48\textwidth}
    \centering
      \includegraphics[width=\linewidth]{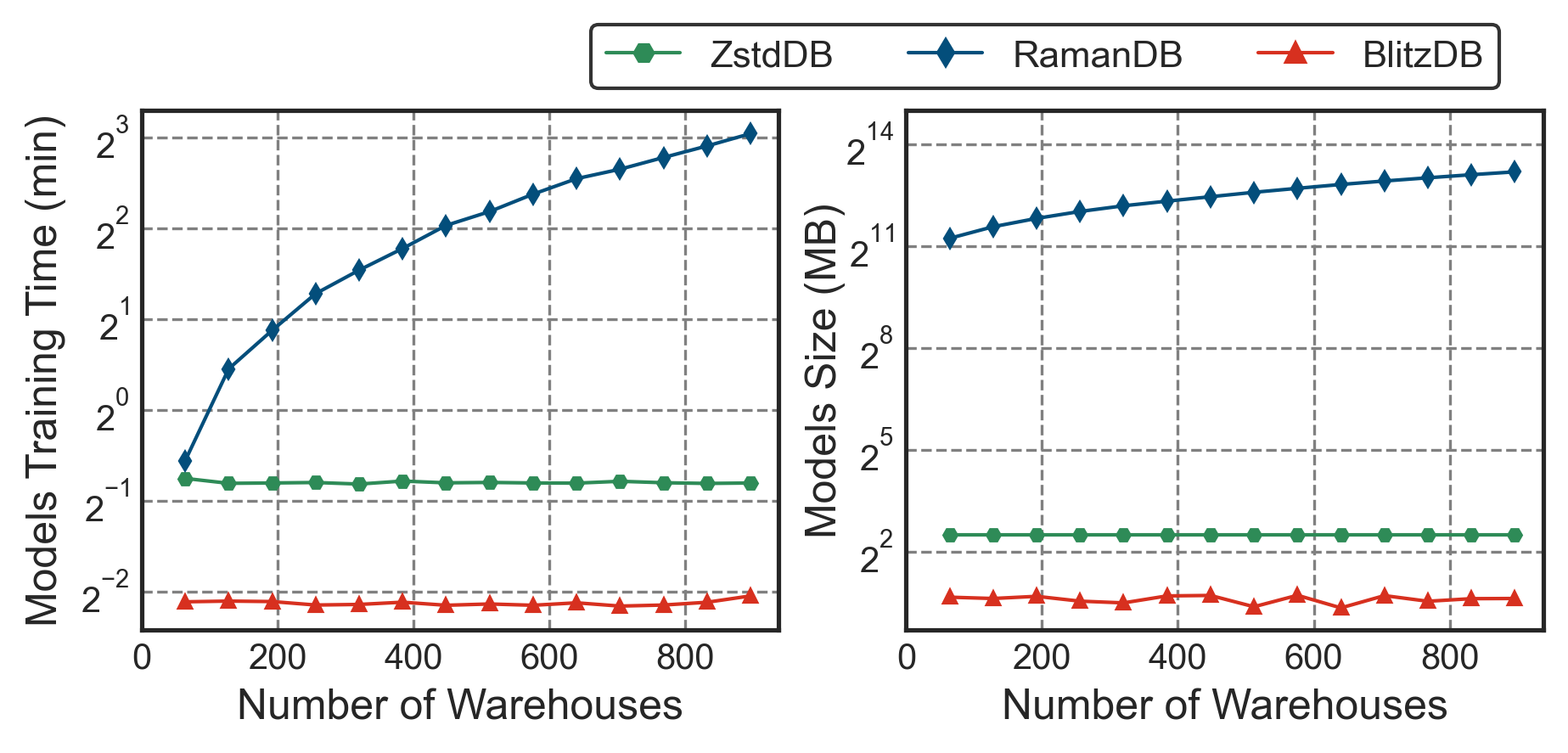}
        {\raggedright
          \begin{minipage}{0.1\linewidth}
              \vspace{-6ex}
          \end{minipage}
          \hspace{3ex}
          \begin{minipage}{0.45\linewidth}
              \vspace{-6ex}
              \caption*{(a) \textnormal{Training Time} }
          \end{minipage}
          \hspace{0ex}
          \begin{minipage}{0.45\linewidth}
              \vspace{-6ex}
              \caption*{(b) \textnormal{Models Size} }
          \end{minipage}
      }
      \vspace{-2.5ex}
      \caption{Models in TPC-C \textnormal{- RamanDB uses full data for training, and the others only sample 16 warehouses for training. }}
      \label{fig:tpcc_model}
    \end{minipage}
    \vspace{-2ex}
\end{figure*}

\subsection{In-Memory Workloads}
In this section, we investigate the performance-space trade-offs of BlitzDB compared to the other baselines when the entire database fits in memory. We vary the number of warehouses in TPC-C from 64 to 896, in increments of 64. In each trial, we use 16 threads. Each database executes 16 million transactions and presents the average throughput results. The training time is measured before the transactions start, while the database size and model size are measured after the transactions. RamanDB uses the entire data set for training, while \work and ZstdDB sample data from 16 warehouses. Because Raman's approach uses a static dictionary, it cannot compress new records. We, therefore, use a buffer (size = 64K tuples) to batch the newly inserted and updated records temporarily. When the buffer is full, we create a new dictionary to compress these buffered records. 
Before adding or using these dictionaries, each thread secures a mutex lock.

As shown in \cref{fig:tpcc_base}, \work compresses the data to 14.8\% of the original (i.e., Silo) with a throughput decrease of around 21\% due to the compression overhead. Such a performance-space trade-off is much more optimized compared to ZstdDB and RamanDB. Moreover, \cref{fig:tpcc_model} shows that \work has the smallest model size and requires orders-of-magnitude shorter time for training compared to the baselines.

\begin{figure}[t!]
  \centering
  \includegraphics[width=\linewidth]{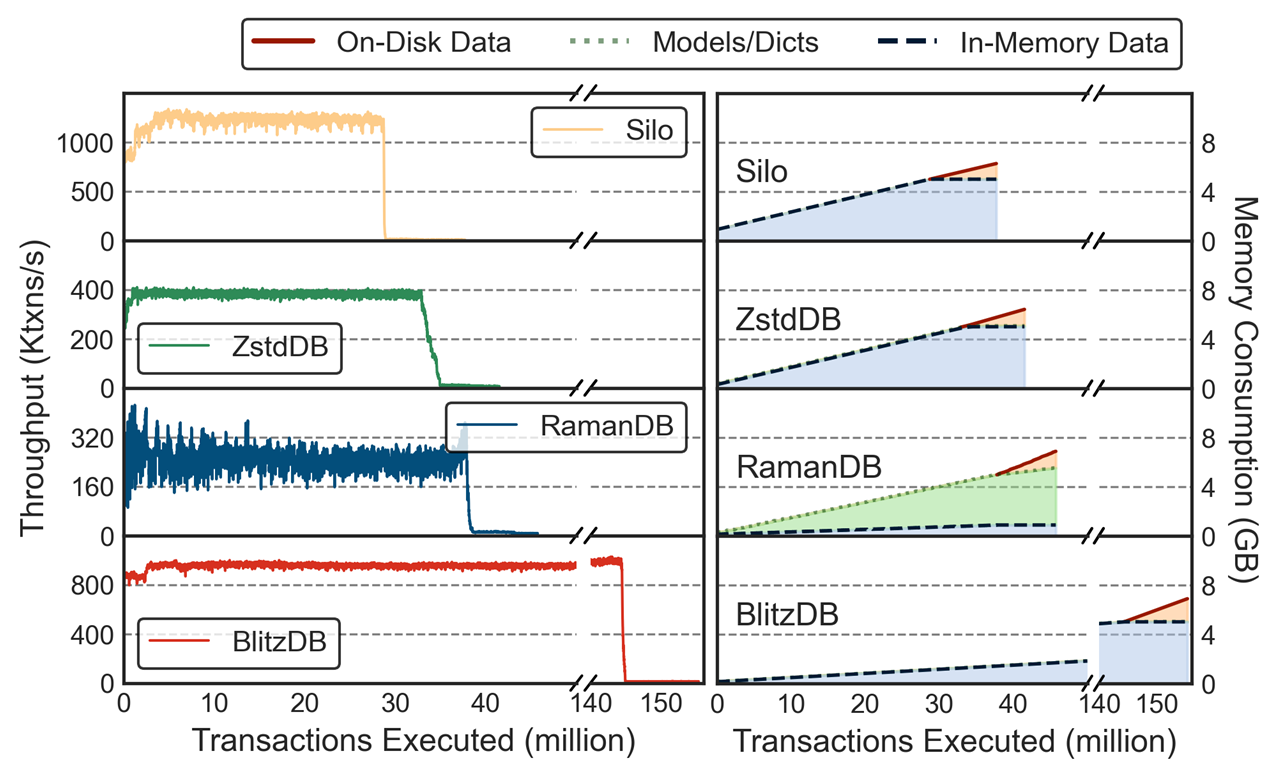}
  {\raggedright 
  
      \begin{minipage}{0.01\linewidth}
          \vspace{-5ex}
      \end{minipage}
      \hspace{3ex}
      \begin{minipage}{0.4\linewidth}
          \vspace{-5ex}
          \caption*{(a) \textnormal{Throughput} }
      \end{minipage}
      \hspace{2ex}
      \begin{minipage}{0.47\linewidth}
          \vspace{-5ex}
          \caption*{(b) \textnormal{Memory Consumption}}
      \end{minipage}
  }
  \vspace{-2ex}
  \caption{TPC-C Large-Than-Memory Workload \textnormal{- Start with 16 warehouses, and each trial runs 20 minutes using 16 threads. }}
  \vspace{-2ex}
  \label{fig:tpcc_ltm}
\end{figure}

\cref{fig:tpcc-scalability} shows the TPC-C throughput as the number of threads grows, with each thread corresponding to a warehouse. Both BlitzDB and Silo demonstrate impressive thread scalability. However, throughput reaches a clear limit after 64 threads, which is particularly noticeable in Raman's approach. This limit can be ascribed to several factors: hyperthreading, the increasing size of the database, shared resources such as the L3 cache, and direct thread contention.

\subsection{Larger-Than-Memory Workloads}
We then evaluate \work under the case when the working set does not fit in physical memory. The tuples are stored on disk with the memory acting as a cache. The memory (i.e., the buffer pool) adopts an LRU replacement policy, and we set the memory limit to 5 GB (excluding the memory occupied by indexes). We start with 16 warehouses (around 1 GB) and execute TPC-C transactions for 20 minutes using 16 threads.

\cref{fig:tpcc_ltm} shows the throughput and memory consumption for the experiments. Note that the x-axis represents the number of executed transactions as did in~\cite{DBLP:conf/sigmod/ZhangAPKMS16}. After 20 minutes of execution, BlitzDB completed $5\times$ as many transactions as Silo. This is because \work not only achieves an exceptional compression factor but also brings moderate compression/decompression overhead to the system. BlitzDB can sustain at high throughput for a longer time because the memory saved by \work allows the database to keep a larger working set in memory.

As a comparison, neither ZstdDB nor RamanDB significantly improves transaction execution. Zstandard suffers from a low compression factor, especially on short tuples. For example, despite using the zstd-dictionary, Zstandard only achieves a compression factor of around 1.3 for the TPC-C table OrderLine. On the other hand, Raman's method is limited by its large dictionary size. It uses a buffer to temporarily hold tuples, compressing them once the buffer is full and then clearing them. This leads to the generation of large compression dictionaries and unstable throughput, with a notable decrease in speed during compressing the buffer tuples.

\section{Related Work}
Lightweight encoding, such as bit-packed, delta, run-length, dictionary, and bit vector encoding, is popular recently \cite{chen2001query, shi2020column, DBLP:conf/sigmod/AbadiMF06ColumnCompression, parquet, abadi2013design, boncz2020fsst, abadi2006integrating}. These are used in column-store databases as they can quickly process large chunks of data using the SIMD technique \cite{jiang2018boosting}. However, delta and run-length encoding methods can be slow when we need to quickly grab just a single tuple/value, as they have to decode an entire data block. Therefore, these lightweight encoding methods are unsuitable for processing data in OLTP databases.

General-purpose block compression methods such as Gzip\cite{DBLP:journals/tit/ZivL77gzip}, Snappy \cite{snappy}, and Zstandard \cite{DBLP:journals/rfc/rfc8878zstd} effectively save disk space by using a sliding window technique to identify word repetitions, thus minimizing data transfer between disk and memory \cite{parquet}. However, the static dictionary makes them \textit{inflexible} for insert/update scenarios. Zstandard provides a special mode for small files. It improves compression by training on all small file, generating a dictionary. This dictionary is required to be loaded before compression and decompression. However, this mode is less effective at compressing files slightly different from the prior data.

Semantic compression needs to estimate probability distributions for each column in a relational table. Babu et al. proposed the first lossy semantic compression method, SPARTAN, for table compression \cite{DBLP:conf/sigmod/BabuGR01}. Subsequently, Gao et al. introduced Squish, which uses a Bayesian network and arithmetic coding \cite{DBLP:conf/kdd/GaoP16}. Later, DeepSqueeze was conceived, using auto-encoders \cite{DBLP:conf/sigmod/IlkhechiCGMFSC20}. However, being lossy, these techniques are suited for data archiving and less so for low-latency transaction processing. For example, Squish sorts each table column to make delta encoding more efficient, but slowing compression. DeepSqueeze does not support columns with high cardinality\footnote{DeepSqueeze uses one-hot encoding for each column in its network architecture, and high cardinality introduces numerous parameters in the fully connected layer~\cite{DBLP:conf/sigmod/IlkhechiCGMFSC20}.}. 
In contrast, \work supports all common column types in databases and provides a faster compression speed. \revision{
Also, DeepSqueeze uses deep learning techniques for structure learning~\cite{DBLP:conf/sigmod/IlkhechiCGMFSC20}. However, this approach lacks explainability and has a slow inference speed. \work, therefore, uses the Bayesian network to capture column correlations for its simplicity. The effectiveness of the Bayesian network approach has been proved in~\cite{DBLP:conf/kdd/GaoP16}.}

\begin{figure}[t!]
    \centering
    \includegraphics[width=0.65\linewidth]{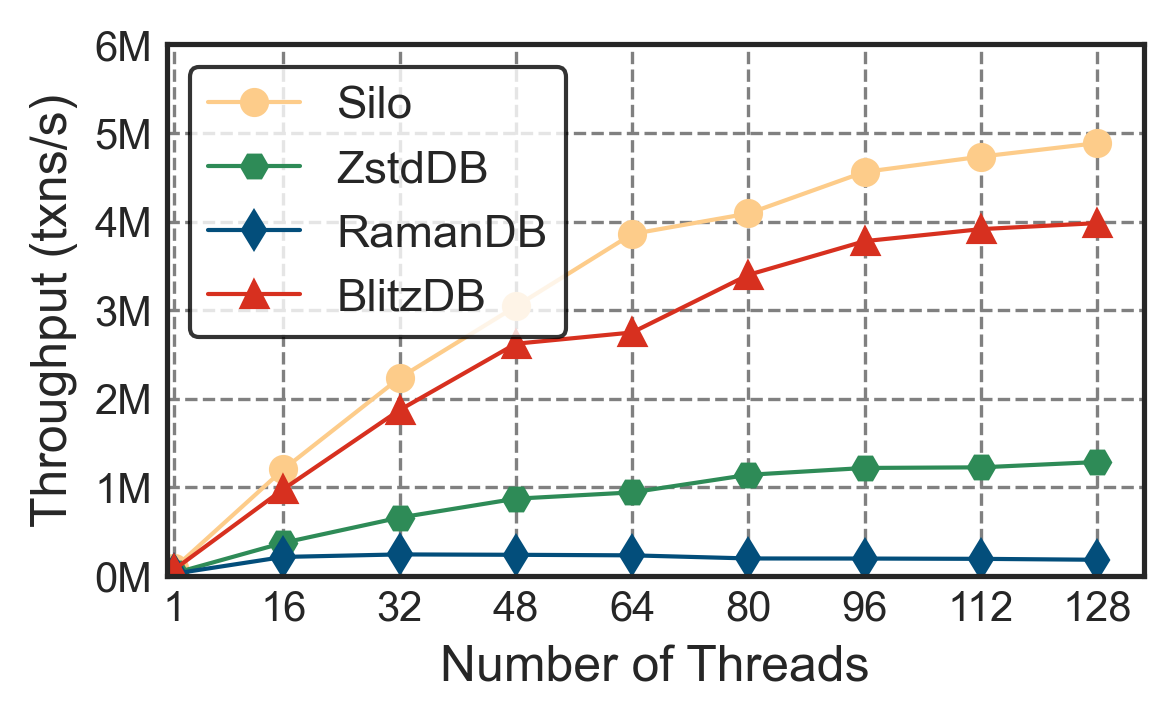}
    \vspace{-3ex}
    \caption{Scalability of Compression \textnormal{- Vary the thread number from 1 to 128, and each trail runs 1 minute.}}
    \label{fig:tpcc-scalability}
    \vspace{-2.5ex}
\end{figure}

\section{Conclusions}
We introduce \work, a high-speed semantic compressor for OLTP databases. We first propose novel semantic models that support fast inferences and dynamic value sets for both discrete and continuous data types; we then introduce a new entropy encoding algorithm, called delayed coding, that achieves significant improvement in the decoding speed compared to modern arithmetic coding implementations. \work has high compression factors and fast decompression speed. We integrate \work into an in-memory OLTP database, Silo. The TPC-C benchmark shows that, for data sets larger than the available physical memory, \work can help the database sustain a high throughput and execute four times more transactions before the I/O overhead dominates.


\clearpage
\balance
\bibliographystyle{ACM-Reference-Format}
\bibliography{base}

\clearpage
\appendix
\section{Arithmetic Coding}
In this section, we give the pseudo-code of arithmetic coding based on integer-based probability representation, as shown in \cref{alg:arithmetic}. All algorithms shown in the appendix are in C++ style. Input of function \texttt{Encode} is a sequence of probability intervals, we first compute the product of these intervals, then find a suitable integer $M$ to represent the product result. Note that some codes are generated during the product computing process, this is because we need an early bits emission technique to avoid precision underflow. 

\begin{algorithm}
\caption{Encoding of Arithmetic Coding}
\label{alg:arithmetic}
\DontPrintSemicolon
\SetKwProg{Fn}{Function}{:}{end}
\SetKwFunction{ArithmeticEncoding}{Encode}

\SetKwData{Range}{$range$}
\SetKwData{Lbase}{$L_{32}$}
\SetKwData{Rbase}{$R_{32}$}
\SetKwData{L}{$L$}
\SetKwData{R}{$R$}
\SetKwFunction{GetPIProduct}{GetPIProduct}

\Fn{\ArithmeticEncoding{$[L_1, r_1], \cdots, [L_s, r_s]$}}{
    $codes$ $\leftarrow$ $\varnothing$ \;
    \L $\leftarrow$ 0 \;
    \R $\leftarrow$ 65536 \;
    
    \For{$i \leftarrow 1$ \KwTo $s$}{
        $code$, $[L, R]$ $\leftarrow$ \GetPIProduct{$[L ,R]$, $[L_i, R_i]$} \;
        $codes \leftarrow codes + code$ \; 
    }
    
    Find smallest $k$ such that $\exists M, [2^{-k} M, 2^{-k} (M + 1)] \subseteq [L, R]$, \;
    \KwRet $codes + M$ \;
}

\Fn{\GetPIProduct{$[L_a, R_a]$, $[l_b, r_b]$}}{
    \Range $\leftarrow$ $R_a - L_a$ \;
    \Lbase $\leftarrow$ ($L_a \ShiftLeft 16 $) + $\Range \cdot l_b$ \;
    \Rbase $\leftarrow$ ($L_a \ShiftLeft 16 $) + $\Range \cdot r_b$ \;
    \L $\leftarrow$ $\Lbase \ShiftRight 16$ \;
    \R $\leftarrow$ $\Rbase \ShiftRight 16$ \;
    
    \tcc{Underflow Check}
    \uIf{\R $>$ \L + 1}{
        \KwRet $\varnothing$, $[\L, \R]$
    } \uElseIf{\R $=$ \L + 1}{
        \tcc{Truncate $[L, R]$ containing 32768 if necessary. }
        \eIf{$(\R \ShiftLeft 16) - \Lbase \geq \Rbase - (\R \ShiftLeft 16)$}{
            \KwRet $\L \ \&\  \texttt{0xffff}$, $[\Lbase\ \&\ \texttt{0xffff}, 65536]$\;
        }{
            \KwRet $\R \ \&\  \texttt{0xffff}$, $[0, \Rbase\ \&\ \texttt{0xffff}]$\;
        }
    } \Else{
        \tcc{Underflow happens, i.e., $l = r$.}
        \KwRet $\L \ \&\  \texttt{0xffff}$, $[\Lbase\ \&\ \texttt{0xffff}, \Rbase\ \&\ \texttt{0xffff}]$\;
    }
}
\end{algorithm}

\textbf{Precision Underflow is Tackled By Early Bits Emission. } In practice, there could be a large number of probability intervals for one record, so the product can easily exceed the precision limit, i.e. $L' \geq R'$ (actually only $L' = R'$ is possible), where $[L', R']$ is the product result. Early bits emission are leveraged in arithmetic coding with integer-based probability intervals. Suppose underflow happens, i.e., $L' = R'$, according to the early bits emission, we can emit the 16-bit $L_{32} \ShiftRight 16$ and let
\begin{equation*}
\begin{aligned}
    L' = L_{32}\ \&\ \texttt{0xffff}, 
    R' = R_{32}\ \&\ \texttt{0xffff}.
\end{aligned}
\end{equation*}
This technique is very interesting, underflow happens if the first 16 bits of $L_{32}$ and $R_{32}$ are the same (we use 32 bits to temporarily represent the product of probability intervals), which means that the next two bytes representing the data have been determined. Therefore, we can directly output the high 16 bits and update the product result in arithmetic coding. 

\textbf{Bits Prefetch. } With integer-based probability representation, decoding of arithmetic coding is similar to the original version \cite{MacKay2003InformationTheory}, except that we know the next symbol can be determined by reading at most 16 bits. Thus, we can read 16 bits each time since the extra bits read will not disturb decoding. In this way, we do not need to check whether the current information is enough to decode the next symbol.

\section{The Constant Time Complexity of Inv-Translate}
Every probability vector  $\pi_1, \cdots, \pi_N$, can be expressed as an equiprobable mixture of $N$ two-point distributions. That is, there are $N$ pairs of integers $(\alpha_1, \beta_1)$, $\cdots$, $(\alpha_N, \beta_N)$ and $N$ probabilities $w_1, \cdots, w_N$ such that
\begin{equation*}
    \pi_i = 1/N \cdot \sum_{j = 1}^N (w_j \mathds{1}_{\{\alpha_j = i\}} + (1 - w_j) \mathds{1}_{\{\beta_j = i\}}) = 1/N \cdot \sum_{j = 1}^N Y^{(j)}_i
\end{equation*}
for $1 \leq i \leq N$, where $Y^{(1)}$,~$\cdots$,~$Y^{(N)}$ are two-point distributions.

This can be shown by induction. It is true when $N = 1$. Assuming that it is true for $N < k$, we can show it is true for $N = k$ as follows. Choose the minimal $\pi_i$. Since it is at most equal to $1/N$, we can take $\alpha_1$ equal to the index of this minimum and set $w_1$ equal to $N\pi_{\alpha_1}$. Then choose the index $\beta_1$ which corresponds to the largest $\pi_i$. This defines the first function $Y^{(1)}$. Note that we used the fact that $(1 - w_1)/ N \leq \pi_{\beta_1}$ because $1 / N \leq \pi_{\beta_1}$. The other $N-1$ functions have to be constructed from the leftover probabilities 
\begin{equation*}
    \pi_1, \cdots, \pi_{\alpha_1} - \pi_{\alpha_1}, \cdots, \pi_{\beta_1} - (1 - w_1) / N, \cdots, \pi_N
\end{equation*}
which, after deletion of the $\alpha_1$-th entry, is easily seen to be a vector of $N-1$ non-negative numbers summing to $(N-1)/N$. For such a vector, the left $Y^{(j)}$s can be found by our induction hypothesis.

\section{Delayed Coding}
In this section, we give the pseudo-code of delayed coding in detail.
\subsection{Encoding Procedure}
The pseudo-code of encoding is shown in \cref{alg:delayed_coding_encoding}. Encoding of delayed coding has two stages: (1) Planning; (2) Filling. 
\begin{algorithm}
\caption{Encoding Procedure of Delayed Coding}
\label{alg:delayed_coding_encoding}
\DontPrintSemicolon{}
\SetKwProg{Fn}{Function}{:}{}
\SetKwFunction{Encode}{Encode}
\Fn{\Encode{$[L_1, R_1], \cdots, [L_s, R_s]$}}{
    \tcc{\textbf{1. Planning.}} 
    $size \leftarrow 1$ \;
    $isVirtual \leftarrow false$ \;
    \While{$i \leftarrow 1$ \KwTo $s$}{
        Mark the interval $[L_i, R_i]$ with $isVirtual$. \;
        $isVirtual \leftarrow false$ \;
        
        $k \leftarrow R_i - L_i$ \;
        $size \leftarrow size \cdot k$ \;
        \If{$size \geq 2^{16}$}{
            $isVirtual \leftarrow true$ \;
            $size \leftarrow size \ShiftRight 16$ 
        }
    }
    
    \tcc{\textbf{2. Filling.}} 
    $data \leftarrow 0$ \;
    $bitStream \leftarrow \varnothing$ \;
    \For{$i \leftarrow s - 1$ \KwTo $0$}{
        $k \leftarrow R_i - L_i$ \;
        $a \leftarrow data \bmod k$ \;
        $data \leftarrow data / k$ \;
        $16bits \leftarrow L_i + a$ \;
        
        \eIf{$[L_i, R_i]$ is virtual}{
            $data \leftarrow (data \ShiftLeft 16) + 16bits$ 
        }{
            $bitStream \leftarrow 16bits + bitStream$ 
        }
    }
    \KwRet $bitStream$ \;
}
\end{algorithm}

\textbf{Planning. } The planning stage is to determine how many and which intervals can be virtual. The planning stage is necessary since we need to fill virtual bits from end to start, as proved in Section 5.3. Note that even if a certain probability interval is claimed to be virtual, we still need to calculate whether it can contribute more space to the virtual bits. In other words, every virtual bit is put to good use. Every time the virtual bits number is larger than 40 bits, we state that one more probability interval can be virtual, as shown in lines 9-11. 

\textbf{Filling. } The filling stage is to fill the virtual bits with information we have collected. According to Section 5.3, we compute each $k_i$ and $a_i$($k$, $j$ in pseudo-code) for each probability interval. If an interval is virtual, we add it to $d_i$ ($data$ in pseudo-code); otherwise, we add it to $bitStream$. Note that in the filling phase, we traverse each probability interval from back to front.

\subsection{Decoding Procedure}
The pseudo-code of decoding is shown in \cref{alg:delayed_coding_decoding}.
Decoding is the inverse process of encoding in delayed encoding. Each time we read 16 bits from the bit stream or virtual bits (depending on the information we collect $data$). The function \texttt{Inv-Translate} is then called to get the symbol corresponding to the 16 bits of the input, and additional information $a/k$. We update the virtual bits with additional information and check if the next virtual probability interval can be obtained. 

The function \texttt{Inv-Translate} is inspired by the alias method, which is a constant time sampling algorithm from a categorical distribution \cite{kronmal1979alias}. Alias method needs a simple prepossessing step, whose time complexity is $O(n)$, and the intuition is that we can partition the probability interval $[0,1)$ into a series of buckets such that when we pick a random value (16bits) in the range, it ends up in some bucket with probability equal to the size of the bucket. 

\textbf{Alias Method. } Let us suppose the finite alphabet has $n$ characters, with weight $k_0, \cdots, k_{n-1}$. Note that we use integer-based probability with a fixed denominator, so the length of probability for each character $[l_i, r_i]$ must be the form $k_i \cdot 2^{-16}$. Then the sum of length should be equal to one, i.e., $\sum_{i = 0}^{n-1} k_i = 65536$. When given two bytes as input, we want to find the corresponding character and the value of $j$ very efficiently. This can be done through a simple trick: let $m$ be such that $2^{m - 1} < n \leq 2^m = M$, we set up $2M$ numbers $a_0, b_0, \cdots, a_{M-1}, b_{M - 1}$ such that:
\begin{enumerate}
    \item $a_i + b_i = 2^{16 - m}$;
    \item each $a_i, b_i$ is associated to one of the $n$ characters $u_i \in [n], v_i \in [n]$;
    \item $\forall j, \sum_{u_i = j} a_i + \sum_{v_i = j} b_i = k_j$, the total sum of the number associated with a character is equal to its weight.
\end{enumerate}
This decomposition is possible for any discrete distribution with a finite number of outcomes. The input 16 bits can be regarded as a value between 0 and 65536, and the first $m$ bit represents the index of a bucket. Since each bucket contains two characters, we need to find the correct character by comparing the last 16 - $m$ bits with the character boundary $a_P$. Once the character is identified, we can output it along with extra information. The function \texttt{Translate} has a constant time complexity. The latter two terms of line 7 and line 10 can be calculated in advance, making this function faster. 

\begin{algorithm}
\caption{Decoding Procedure of Delayed Coding}
\label{alg:delayed_coding_decoding}
\DontPrintSemicolon{}
\SetKwProg{Fn}{Function}{:}{end}
\SetKwFunction{Decode}{Decode}
\Fn{\Decode{bitStream, $record$}}{
    $data \leftarrow$ 0 \;
    $size \leftarrow$ 1 \;         
    
    \While{$record$ has unfilled attributes}{
        $16bits \leftarrow \text{next 16-bit from bitStream}$ \;
        $value, a, k \leftarrow \texttt{Inv-Translate}(16bits)$ \; 
        \tcc{\texttt{Inv-Translate} can be a map simply.}
        $data \leftarrow data \cdot k + a$ \;
        $size \leftarrow size \cdot k$ \;
        Fill $record$ with $value$. \;
        
        \If{$size \geq 2^{16}$}{
            $interval \leftarrow data \ \&\  \texttt{0xffff}$ \;
            $bitStream \leftarrow interval + bitStream$ \;
            $data \leftarrow data \ShiftRight 16$ \;
            $size \leftarrow size \ShiftRight 16$ \;
        }
    }
    
    \KwRet $rd$ \;
}
\end{algorithm}

\begin{algorithm}
\caption{Inv-Translate of \work{}}
\DontPrintSemicolon{}
\SetKwProg{Fn}{Function}{:}{end}
\SetKwFunction{Decode}{Decode}
\SetKwFunction{InvTranslate}{Inv-Translate}
Find $m, \{a_{M}\}, \{b_{M}\}, \{u_{M}\}, \{v_{M}\}$ by performing decomposition on $\{k_n\}$. \;
\Fn{\InvTranslate{16bits}}{
    $P \leftarrow 16bits \ShiftRight (16 - m)$ \;
    $Q \leftarrow 16bits \ \&\  (2^{16-m} - 1)$ \;
    \eIf{$Q < a_P$}{
        $c \leftarrow u_P$ \;
        $j \leftarrow 16bits - \sum_{z < P} a_z \mathds{1}_{\{u_z = c\}} - \sum_{z < P} b_z \mathds{1}_{\{v_z = c\}}$
    }{
        $c \leftarrow v_P$ \;
        $j \leftarrow 16bits - \sum_{z \leq P} a_z \mathds{1}_{\{u_z = c\}} - \sum_{z < P} b_z \mathds{1}_{\{v_z = c\}}$
    }
    
    \KwRet $c, j, k_{c}$ \;
}
\end{algorithm}

\textbf{Decomposition. }It remains to show that such decomposition is indeed possible, which can be obtained through the following procedure: 
\begin{enumerate}
    \item Initially, let $S = \{i: k_i < 2^{16 - m}\}$, $L = \{i: k_i \geq 2^{16 -m}\}$.
    
    \item If $S \neq \varnothing$, choose $k_s \in S, k_l \in L$, associate $a_i, b_i$ to $k_s$ and $2^{16 - m} - k_s$, respectively, add $k_l - (2^{16 - m} - k_s)$ to $S$ or $L$; accordingly.
    
    \item Otherwise, choose $k_l \in L$, associate $a_i, b_i$ to $0, 2^{16 - m}$ respectively, add $k_l - 2^{16 - m}$ to $S$ or $L$ accordingly.
    
    \item If $S \neq \varnothing$ or $L \neq \varnothing$, return to (2).
\end{enumerate}
It can be proved via induction that at the end of step (3), $|S| + |L| < |S'| + |L'|$
always holds, where $S', L'$ are the sets in the previous loop after step (3), so the correctness of the procedure can be guaranteed.

\textbf{Discussion: } The Alias method tells us that we can sample from a discrete distribution with complexity $O(1)$. This process has two parts: (1) Get a random value; (2) Find the corresponding symbol in the distribution. In decompression of delayed coding, we read the ``random value'' from the bit stream and then generate the corresponding symbols. Such symbols are prepared and stored in compression deliberately. 


\section{Uniqueness and Efficiency}
\subsection{Uniqueness of Delayed Coding }
We then show that delayed code is uniquely decodable. We first prove such a virtual bits input can be constructed. Recall that we use 16 bits to represent probability interval and collect extra bits in each probability interval as virtual bits. Suppose there are some probability intervals $[L_i, R_i]$, where $k_i = R_i - L_i$, $i = 1, \cdots, n$, and $\prod_{i = 1}^n k_i \geq 2^{16}$. We want to store a probability interval (an integer of 16 bits) $D$ with virtual bits provided by its former intervals, where $D \in [1, 2^{16})$. Note that for each $[L_i, R_i)$, we can store integer $a_{i}$ if and only if $0 \leq a_{i} / k_i < 1$. Thus, let $d_{n+1} = D$, and
\begin{equation}
\begin{aligned}
    a_{i} = d_{i}\ \%\ k_{i},\ \ \ d_{i} = \lfloor d_{i+1} / k_{i+1} \rfloor,
\end{aligned}
\label{encode_formula}
\end{equation}
for $i = 1, \cdots, n$. Then the following decomposition holds:
\begin{equation*}
\begin{aligned}
    D = d_{n+1} = d_{n} \cdot k_n + a_{n} = \cdots = \sum_{i = 1}^n a_i \prod_{j = i + 1}^n k_j \\
\end{aligned}
\end{equation*}
which means that information $D$ are divided into $\{a_n\}$ and stored by these probability intervals. Then, the code corresponding to probability interval $[L_i, R_i)$ is $L_i + a_{i}$. Intuitively, $d_i$ denotes the information we have gained from the first $i-1$ probability intervals. For example, $d_{n+1} = D$, $d_1 = 0$ since 
\begin{equation*}
    0 \leq d_{1} = \lfloor d_{2} / k_{1} \rfloor \leq \cdots \leq \lfloor d_{n+1} / \prod_{i = 1}^n k_i \rfloor = \lfloor D / \prod_{i = 1}^n k_i \rfloor = 0.
\end{equation*}
Moreover, we can get $D$ from $\{a_n\}$, notice that
\begin{equation}
    d_{i+1} = d_{i} \cdot k_{i} + a_{i}.
    \label{decode_formula}
\end{equation}
The virtual bits can be updated according to \cref{decode_formula} since $d_{i}$ is known and $a_{i}$, $k_{i}$ are given by translating the $i$th probability interval.

Since virtual bits input has been proven to be correct, it suffices to show the obtained code $L_i + a_{i}, i = 1, \cdots, n$ is uniquely decodable because its uniqueness is equivalent to a simple concatenation of probability interval codes. Recall each outcome is assigned a disjoint probability interval. Therefore, any number in the interval would be a unique identifier, note that 
\begin{equation*}
    L_i + a_{i} \in [L_i, R_i)
\end{equation*}
for $i = 1, \cdots, n$ since $a_{i} = d_{i}\ \%\ k_{i}$, then $L_i + a_{i}$ is obvious a unique representation. Also, since each codeword is 16-bit, it is a prefix code. Prefix code is always uniquely decodable, so is delayed code.

\subsection{Efficiency of Delayed Coding } 
Suppose $[L_1, R_1), \cdots, [L_n, R_n)$ are $n$ intervals, and $L_i - R_i$ are random variables such that 
\begin{equation*}
    P(R_i - L_i = x) = \frac{x}{\sum_{j = 1}^{65536} j}, x \in [0, 65536]; \text{otherwise } 0
\end{equation*}
Let $\mu = \mathbb{E}(R_i - L_i)$. Suppose delayed coding generates one virtual 16-bit once the virtual bits number is larger than $\alpha$, $\alpha \geq 16$. Delayed coding encodes every $m$ intervals as a block, where $n+1 \geq m >> \mu$. 

Compared to entropy, delayed coding may use more bits for two reasons: (1) integer division; and (2) unused virtual bits when coding ends. 

\textbf{Integer Division Loss. }Let $D_i$ be the $i^{th}$ time to generate one virtual 16-bit, and the index of current interval is $I_i$. We have $D_i \geq 2^\alpha$, and $E(I_1) = \alpha / (1 - \log_2(\mu))$. Also, let $\prod_i = \prod_{j = I_i}^{n} (R_j - L_j)$. 
Notice that 
\begin{equation*}
\begin{aligned}
    D_i / 2^{16} &= \lfloor D_i / 2^{16} \rfloor + r_i / 2^{16},
\end{aligned}
\end{equation*}
where $r_i = D_i \% 2^{16}$. According to entropy, the total number of virtual bits we can get for use from interval $I_i$ to $n$ is $16 + \log_2(D_i / 2^{16}\cdot\prod_i)$. But due to the existence of integer division, we only have
\begin{equation*}
    16 + \log_2(\lfloor \frac{D_i}{2^{16}} \rfloor\cdot\prod_i) = 16 + \log_2((\frac{D_i}{2^{16}} - \frac{r_i}{2^{16}})\cdot\prod_i)
\end{equation*}
bits we can use for encoding the $I_i$ to $n$ intervals. Thus, bits loss due to the $i^{th}$ virtual bytes generation is 
\begin{equation*}
    \log_2(\frac{D_i}{D_i - r_i}).
\end{equation*}
Note that $D_i \geq 2^\alpha$ and $\alpha \geq 16$, then we can get the total bits loss due to integer division,
\begin{equation*}
    \sum_i \log_2(\frac{D_i}{D_i - r_i}) \leq \sum_i \log_2(\frac{2^{\alpha}}{2^{\alpha} - r_i}).
\end{equation*}
Notice there are at most $log_2 \mu^n = n\log_2 \mu$ redundant bits in these intervals, so the number of virtual bytes generation is less than $(n \log_2 \mu - (\alpha - 16))/16$, also we have $r_i = D_i \% 2^{16} < 2^{16}$, therefore
\begin{equation*}
    \sum_i \log_2(\frac{2^{\alpha}}{2^{\alpha} - r_i}) \leq  \frac{n \log_2 \mu - (\alpha - 16)}{16} \cdot \log_2 \frac{2^\alpha}{2^{\alpha} - 65535}.
\end{equation*}

\textbf{Unused Bits Loss. } On the other hand, we consider the unused virtual bits when coding ends. Since there are $n$ intervals and $m$ intervals are encoded as a block, we have $\lfloor n / m \rfloor + 1$ blocks. Therefore, total bits loss due to unused virtual bits in the end is 
\begin{equation*}
    \sum_{k = 1}^{\lfloor n / m \rfloor + 1} \log_2 D_{\infty}^k \leq (\lfloor n / m \rfloor + 1) \cdot \alpha,
\end{equation*}
where $D_{\infty}^k$ denotes the number of virtual bits when $k^{th}$ block ends, we have $2^{\alpha - 16} \leq D_{\infty}^{k} < 2^{\alpha}$. 

\textbf{Length of Delayed Codes. } Thus, the length of delayed codes is 
\begin{equation*}
    L^{n, \alpha, m} \leq n \cdot C + (\lfloor n / m \rfloor + 1) \cdot \alpha + \frac{n \log_2 \mu}{16} \cdot \log_2 \frac{2^\alpha}{2^{\alpha} - 65535}.
\end{equation*}
where $C$ is the entropy of an interval with the given distribution. In fact, $\mu \approx 43690.33$. Simplify it, we get 
\begin{equation}
    L^{n, \alpha, m} \leq n \cdot C + (\lfloor n / m \rfloor + 1) \cdot \alpha + n\cdot \log_2 \frac{2^\alpha}{2^{\alpha} - 65535}.
\end{equation}
Let $m = n + 1$ and $\alpha = 24$, we get the \textbf{archive mode} setting of delayed coding and 
\begin{equation*}
    L^{n, 24, n+1} \leq n \cdot C + 24 + \cdot 10^{-8} \cdot n.
\end{equation*}
Since an interval size is $\log_2(\mu) >> 10^{-8}$ byte in average, 
Let $\alpha = 16$, we get the \textbf{random access mode} setting of delayed coding and
\begin{equation*}
    L^{n, 16, m} \leq n \cdot C + 16 \cdot (\lfloor n / m \rfloor + 1 + n).
\end{equation*}

\section{Integrated Semantic Models}
In this section, we introduce two integrated semantic models: the JSON node model and the time-series model.

\subsection{JSON Node Model}
\work can compress the collection of JSON objects. Each object follows a JSON schema, where optional nodes and multi-type nodes are allowed \cite{DBLP:conf/www/PezoaRSUV16JSONSchema}, for example, $A = \{\text{Name}: ``John", \text{Age}: 18, \text{Job}: ``student"\}$, and $B = \{\text{Name}: ``Mary", \text{Age}: ``Eighteen" \}$, where the absence of the \textit{Job} node in $B$ and the differing types of \textit{Age} in them make it challenging to compress.

A JSON node model consists of metadata, attributes, and sub-models, as shown in \cref{json}. The metadata model consists of two categorical models: the existence model and the type model. The former validates node presence using two unique values: existed or not; while the latter verifies node type, and has four possible outputs. The multi-type node has more than one attribute model. Sub-models handle arrays or objects with a vector of pointers to other JSON node models. Together, all JSON node models form the JSON tree model for nested calling. For example, the root node ``Person'' is an object with three children. The optional ``Job'' node has an additional existence model, while the multi-type node ``Age'' requires a type model. Given a node, \work verifies its existence, determines its type, and then processes it using the appropriate attribute models. If the node is an array or object, other JSON node models are invoked to handle it and generate intervals.

\begin{figure}[t!]
  \centering
  \includegraphics[width=\linewidth]{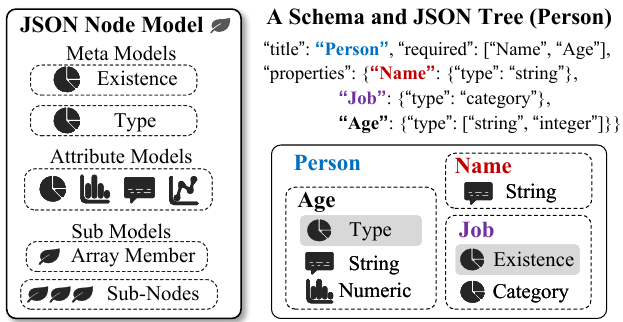}
  \caption{JSON Node Model - \textnormal{The JSON node model is shown on the left. Node ``Job'' is optional, and node ``Age'' has multiple types.}}
  \label{json}
\end{figure}

\subsection{Time-series/Markov Model}
Both the time-series and Markov models capture the first-order transition property in a continuous or discrete column. The time-series model uses the Autoregressive-moving-average (ARMA) technique to decompose a series of numeric values into residuals and a regression model \cite{box2015timeseries}. Compressing these residuals instead of the original values improves performance since the residuals have fewer outliers and more symmetry distribution. Similarly, the Markov model is for a Markov chain categorical column. It has multiple categorical models, each corresponding to a unique value/state. The current value determines which categorical model is used to compress the next one, providing larger intervals for delayed coding. However, the time-series/Markov model in \work cannot be used with the record index.

\section{Experiments}
In this section, we evaluate \work with Squish \cite{DBLP:conf/kdd/GaoP16}, Gzip \cite{DBLP:journals/tit/ZivL77gzip} and Zstandard (level-9) for the table archive task. We also give the evaluation of the times-series and JSON model.
\subsection{Archive Compression}
\begin{figure*}[t!]
  \centering
  \includegraphics[width=\linewidth]{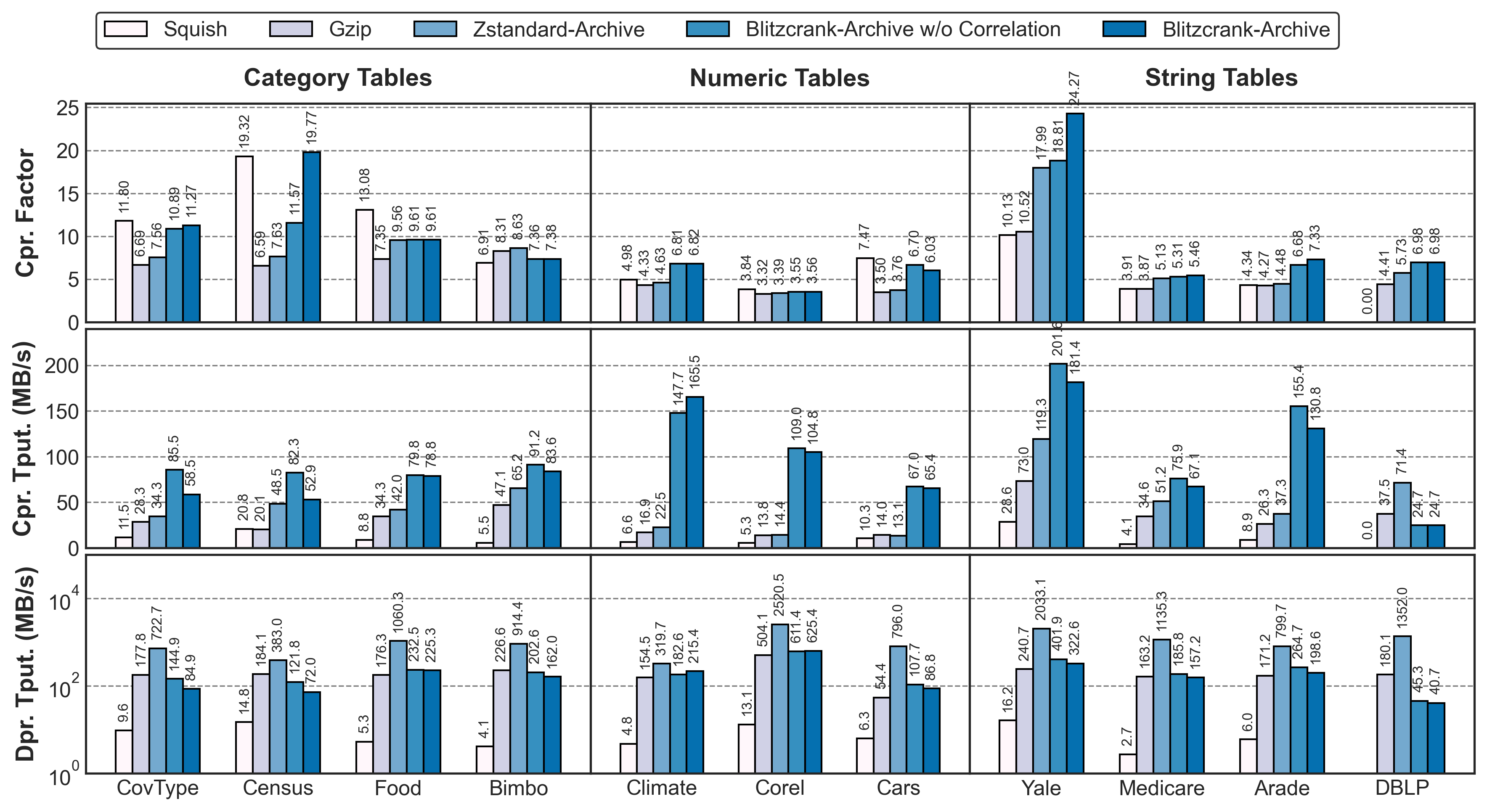}
  \caption{Compression for Archive \textnormal{- We compare the performance of archive-focused compressors, including Squish, Gzip, Zstandard and \work on real tables. Zstandard and \work are in the table archival mode. }}
  \label{compress_archive}
\end{figure*}
We also evaluate \work for the table archive task. Each compressor compresses the given table in memory and then decompresses it to the original, which is typical for general-purpose compressors. We record the compression factor, the compression time, and the decompression time. Archive compression focuses on the compression factor and throughput, rather than the latency. Squish is omitted for compressing \textit{dblp}, because it does not support JSON.

\cref{compress_archive} shows the archive results. For the compression factor, \work compresses categorical and numeric tables comparably to Squish, and outperforms it on string tables. This is because our semantic string model is more effective than the letter-by-letter method. Also, \work-Archive doubles the compression factor in \textit{US Census 1990} compared to \work in random access mode. For the compression speed, \work is $20\times$ faster than Squish on average for both insertion and access latency, thanks to our semantic models and delayed coding. Zstandard excels at archiving, which is expected given its use of a dictionary for compression. In the table archiving task, Zstandard can build a dictionary based on a longer context than in the random access task. The use of longer dictionary entries allows for the output of more bytes per access, thus increasing throughput \cite{DBLP:journals/tit/ZivL78}. 

\subsection{Semi-Structured Data and Time Series}
\begin{table}[t!]
    \centering
        \caption{Compression Factors \textnormal{- \work excels in compressing JSONs; Only \work can benefit from the ARMA model.}}
    \begin{tabular}{cccc}
             & \textbf{Gzip} & \textbf{Zstd} & \textbf{Blitz.} \\
    \midrule
    Relation & 4.13                     & 4.86          & 5.32         \\ 
    JSON     & 4.41                     & 5.73          & 6.98      \\
    $\uparrow$   &  6.78\%              & 17.9\%        & \textbf{31.2\%} \\
    \midrule
    Original      & 4.33                  & 4.64                  & 4.90                \\
    Residual      & 4.29                  & 4.31                  & 6.81                \\
    $\uparrow$    & -0.92\%   & -7.11\%   & \textbf{39.0\%} \\
    \midrule
    \end{tabular}
    \label{cpf_table}
\end{table}
    
\textit{dblp} \cite{dblp} is in JSON format. To investigate how \work performs in different data formats, we extract each attribute from \textit{dblp}, forming new columns. These extracted columns can form a relational table with empty values, called \textit{relation-dblp}. All compressors are in archive mode in this experiment. \cref{cpf_table} shows that the compressors for \textit{relation-dblp} do not reach the high compression factors of \textit{dblp}, but their factors do not reduce proportionally. This is because of redundant repeat property keys in JSON objects. Including these keys enhances the compression factor, with \work being the most efficient compressor, gaining the most from JSON structures.

We then investigate whether the time-series model helps our numeric model improve performance. The time-series model converts the continuous data into residuals through the ARMA model. We compare the compression factor of each compression algorithm on residuals and the original data. We use the \textit{Jena-Climate} in this experiment. \cref{cpf_table} shows the results. Among the compression methods we tried, only \work benefits from the residuals. Residuals typically follow a normal distribution and agree with the assumption of our numeric model. Residuals have fewer outliers and are easier to compress than original data. We estimate the entropy of the time-series data and its residual by discretizing the values into 512 buckets. The results meet our expectations: the entropy is reduced by approximately 30\%, which agrees with the improvement of 39\% shown in \cref{cpf_table}.


\end{document}